\documentclass{amsart}

\usepackage{amsrefs,upgreek,enumerate}

\usepackage{amsthm}
\newtheorem{theorem}{Theorem}[section]
\newtheorem{corollary}[theorem]{Corollary}
\newtheorem{lemma}[theorem]{Lemma}

\newtheorem{remark}{Remark}[section]

\newcommand{\D}{\mathrm{D}}
\newcommand{\e}{\mathrm{e}}
\newcommand{\ii}{\mathrm{i}}
\newcommand{\dd}{\mathrm{d}}
\newcommand{\upi}{\uppi}
\newcommand{\re}{\mathfrak{R}}
\newcommand{\im}{\mathfrak{I}}
\newcommand{\loc}{\mathrm{loc}}
\newcommand{\OO}{\mathrm{O}}

\author{Andrzej Hanyga}
\address{ul. Bitwy Warszawskiej 1920r. 14/52\\
Warszawa, PL\\
{\tt ajhbergen@yahoo.com}}

\title{Asymptotic estimates of viscoelastic Green's functions near the wavefront}

\begin{document}
\maketitle

\begin{abstract}
Asymptotic behavior of viscoelastic Green's functions near the wavefront is expressed in terms 
of a causal function $g(t)$ defined in \cite{SerHanJMP} in connection with the Kramers-Kronig 
dispersion relations. Viscoelastic Green's functions exhibit a discontinuity at the wavefront 
if $g(0) < \infty$. Estimates of continuous and discontinuous viscoelastic Green's functions 
near the wavefront are obtained. 
\end{abstract}

\textbf{Keywords.} viscoelasticity, wavefront, attenuation, dispersion, shock wave, Bernstein function

\section*{Notation.}

{\small
\begin{tabular}{lll}
$\D^n f(t)$ & $\dd^n \, f(t)/\dd t^n$\\
$f\ast g $ & Volterra convolution & $\int_0^t f(s)\, g(t-s) \, \dd s$\\
$\mathcal{L}(f) = \tilde{f}$ & Laplace transform of $f$ & $\int_0^\infty \e^{-p\, t}\, f(t)\, \dd t$\\
$]a,b]$ & the set $a < x \leq b$ &\\
$\re z, \im z$ & real and imaginary part of $z$ &\\
$J(t), J^\prime(t)$ & creep compliance, creep rate 
\end{tabular}
}

\section{Introduction.}
 
It is fair to say that seismology and particularly seismic inversion theory is implicitly 
based on the assumptions that the medium supports discontinuity waves. This assumption 
ensures that seismic signals travel with the wavefront speed while retaining their shape 
except for scaling. However many published explicit models of viscoelastic media such 
as the fractional Zener models 
\cite{CaputoMainardi,BagleyTorvik3,NasholmHolm2013,MainardiVE} used in among others in seismology 
as well as the Cole-Davidson and Havriliak-Negami models \cite{HanWM2013,HavriliakNegami66}  
used for modelling mechanical response of polymers all lack this property. All these models 
are characterized by asymptotic power law behavior of the attenuation function (i.e. the logarithmic 
attenuation rate expressed as a function of circular frequency) in the high frequency range,
which entails continuity of the viscoelastic Green's function 
and all its temporal and spatial derivatives at the wavefront. A viscoelastic pulse in such a medium 
is preceded by a pedestal, i.e. a flat precursor that follows immediately after the wavefront
\cite{Strick1:ConstQ,HanQAM} and therefore it travels at a slower effective speed depending on the propagation time.
(We ignore here the seismological and acoustic models in which signals propagate at infinite speeds
\cite{Kjartansson:ConstQ,CaCaMaHa,StrakaMeerschaertMcGough2013}.). A seismic signal travels with the wavefront if the wavefront of Green's function supports a discontinuity of the stress--velocity
field and lags behind the wavefront if Green's function exhibits the pedestal effect 
\cite{HanQAM}. Relevance of the pedestal effect for seismic inversion was demonstrated in 
\cite{FastTrack} and for borehole velocity surveys in \cite{Strick3:ConstQ}. 

Seismological literature, however, knows two exceptional models which are consistent with the assumptions 
made in seismology: the Jeffreys-Lomnitz creep and a creep model due to Strick and Mainardi. 
In the Jeffreys-Lomnitz and Strick-Mainardi viscoelastic media attenuation is bounded in the 
high frequency range and discontinuities are not immediately smoothed out. 
The last-mentioned models are considered in some detail in \cite{HanDue}.  
The results obtained in this paper apply to both kinds of viscoelastic media. Besides 
we establish here a new criterion for existence of discontinuity waves.

In \cite{SerHanJMP} and \cite{HanWM2013} a theory of attenuation and dispersion in 
general viscoelastic media 
with completely monotone relaxation moduli (or, equivalently, with creep compliances 
that are Bernstein functions) 
was developed. An application of this theory to the study of the relation between
the high-frequency asymptotics of the attenuation function and the wavefront singularities of 
Green's function of an acoustic equation was made in \cite{HanJCA}. These results 
carry over to the viscoelastic equations of motion in one-dimensional space without 
any modification. 
It was shown in \cite{HanJCA} that Green's functions with an unbounded attenuation function 
do not exhibit wavefront discontinuities, while those with an attenuation function
increasing faster than the logarithmic rate in the high-frequency range exhibit the 
pedestal effect. If the attenuation function grows at the logarithmic rate in the 
high-frequency range then the regularity of the wavefront gradually increases with time. 

Related results have been obtained in earlier papers.
Lokshin \cite{LokshinSuvorova82} proved that viscoelastic Green's functions are infinitely differentiable at the
wavefront if $J^\prime(t)/\vert \ln(t)\vert$ is bounded or tends to infinity as $t \rightarrow 0$. Desch and Grimmer obtained
the same result if the relaxation kernel has a logarithmic or stronger singularity \cite{DeschGrimmer85}. The same authors showed that 
in the gap between boundedness and logarithmic singularity more complex regularity patterns are observed, such as 
stepwise regularization \cite{DeschGrimmer89a,HrusaRenardy85,HanJCA}, which we shall also demonstrate here. 

In this paper we shall use a causal function underlying the Kramers-Kronig dispersion relations
\cite{SerHanJMP,HanWM2013} to study the behavior of viscoelastic 
Green's functions near the wavefront in more detail. In particular we shall derive a 
well-known criterion for existence of shock waves in 
viscoelastic media.  This criterion 
probably also applies to nonlinear viscoelastic media under some assumptions on the nature of non-linearity \cite{Greenberg68,ReHrNo:VE}.  The new approach gives more detailed information
about the wavefront asymptotics of Green's functions. For example, for
discontinuous Green's functions it provides the information about the magnitude of the 
wavefront jump discontinuity as well as on the behavior of Green's function immediately 
behind the wavefront.

In \cite{SerHanJMP,HanWM2013} a causal function $g$ of physical dimension 1/L was introduced in such a way that the 
attenuation function $\mathcal{A}(\omega)$, representing the logarithmic attenuation rate 
expressed as a function of circular frequency,
is equal to the real part of $(-\ii \omega)\,\tilde{g}(-\ii \omega)$ 
while the dispersion function $\mathcal{D}(\omega) := \omega\,(1/c(\omega) - 1/c_0)$ is given by the
the expression $\im [\ii \omega \, \tilde{g}(-\ii \omega)]$. 
$c(\omega)$ denotes here the phase speed of a wavefield oscillating with the circular frequency $\omega$.
Existence of the function $g$ is thus linked to the validity of the Kramers-Kronig dispersion relations. 

In this paper we shall show that 
the function $-g(t - \vert x \vert/c_0) \, \vert x \vert$ is an asymptotic phase function near the wavefront of viscoelastic 
Green's functions, where $c_0 := \sup_\omega c(\omega)$ is the wavefront speed and it is assumed that $c_0 < \infty$. 
The theory is based on the 
assumption that the relaxation modulus is locally integrable and completely monotonic (LICM). This assumption
is equivalent \cite{Molinari,HanDuality} to the assumption that the creep compliance $J(t)$ is a Bernstein function 
\cite{BernsteinFunctions}. This assumption is crucial for the existence of the function $g$. 
The assumption that the wavefront speed $c_0$ is finite is equivalent to the inequality $J_0 := 
J(0) > 0$. If $\lim_{t\rightarrow 0} g(t) < \infty$ then Green's function has a jump discontinuity at $\vert x \vert = c_0\, t$. 

It will also be shown that under some additional assumptions valid for many explicit models viscoelastic Green's function
is bounded by $C\, \exp(-g(t-\vert x\vert/c_0)\, \vert x\vert)$ for $t > \vert x\vert/c_0$, where 
$C$ is some constant. 

These results provide an estimate of Green's function near the wavefront and answer the question whether the medium
supports propagation of discontinuities at the wavefront. 
The methods developed in this paper and in \cite{HanWM2013} are applied in \cite{HanDue}
to two examples of viscoelastic media which support discontinuity waves. 

\section{Mathematical preliminaries.}

We shall consider the Initial-Value Problem (IVP) 
\begin{gather} \label{eq:mDlinpro}
\rho \, u_{,tt} = \nabla\cdot[G(t)\ast \nabla u_{,t}] + \delta(x)\,\delta(t), 
\qquad t \geq 0,\quad x \in \mathbb{R}\\
u(0,x) = 0; \quad u_{,t}(0,x) = 0 \label{eq:IVmDlinpro}
\end{gather} 
for the particle velocity $u$ in a hereditary viscoelastic medium. 
Duhamel's principle holds for eq.~\eqref{eq:mDlinpro}, hence the above IVP is equivalent to the IVP 
\begin{gather} \label{eq:mDlinpro1}
\rho \, u_{,tt} = \nabla\cdot[G(t)\ast \nabla u_{,t}], 
\qquad t \geq 0,\quad x \in \mathbb{R}\\
u(0,x) = 0; \quad u_{,t}(0,x) = \delta(x)/\rho \label{eq:IVmDlinpro1}
\end{gather} 
Since we are going to consider Green's functions with jump discontinuities the field $u(t,x)$ will be
considered as the velocity field. 

It is assumed that 
the relaxation modulus $G(t)$ (defined for $t > 0$) is completely monotonic, i.e. it has derivatives $\D^n\, G$ of
arbitrary order and these derivatives satisfy the inequalities 
$$(-1)^n \, \D^n\, G(t) \geq 0 \qquad\text{on $\mathbb{R}$ for $n =0,1,2\ldots$}$$
It is also assumed that $G$ is locally integrable, or, equivalently
$$\int_0^1 G(s)\, \dd s < \infty,$$
In short, $G$ is LICM. It follows \cite{HanDuality} that the creep compliance $J(t)$ 
($t \geq 0$), related to the relaxation modulus $G(t)$ by the equation
\begin{equation} \label{eq:duality}
\int_0^t G(s)\, J(t - s)\, \dd s = t \qquad\text{for $t \geq 0$}
\end{equation}
is a Bernstein function (BF), i.e. it is non-negative, differentiable and its derivative
$J^\prime$ is LICM. Conversely, for a given BF $J$ eq.~\eqref{eq:duality} has a unique 
solution $G$ and $G$ is a LICM function \cite{HanDuality}.
We also recall that $J(t)$ tends to a finite limit $J_0$ as $t \rightarrow 0$ 
$0 \leq J_0 < \infty$ and $J_0 = 0$ if and only if 
$\lim_{t\rightarrow 0+} G(t) = \infty$.

The solution of the IVP (\ref{eq:mDlinpro}--\ref{eq:IVmDlinpro}) is given by the formula
\begin{equation} \label{eq:Green1D}
u(t,x) = \frac{1}{4 \upi \ii} \int_{-\ii \infty + \varepsilon}^{\ii \infty + \varepsilon}
\frac{\kappa(p)}{2 \rho\, p^2}\, \e^{p\, t -\kappa(p)\, \vert x \vert} \, \dd p
\end{equation}
where \begin{equation} \label{eq:kappadef}
\kappa(p) := \rho^{1/2}\, p \, \left[p \, \tilde{J}(p)\right]^{1/2}
\end{equation}
and $\varepsilon > 0$.

In \cite{SerHanJMP,HanWM2013} it was shown that $\kappa(p)$ is a complete Bernstein function 
\cite{BernsteinFunctions,Jacob01I}, i.e.
\begin{equation}\label{eq:CBF}
\kappa(p) = p^2\, \tilde{F}(p),
\end{equation}
 where $F$ is a Bernstein function and $\kappa(0) = 0$.
Consequently $\kappa$ has an integral representation of the following form \cite{HanWM2013} 
\begin{equation} \label{eq:kappa}
\kappa(p) = \frac{p}{c_0} + p \int_{]0,\infty[} \frac{\nu(\dd r)}{p + r}
\end{equation}
where $\nu$ is a positive Radon measure satisfying the inequality
\begin{equation} \label{eq:doss}
\int_{]0,\infty[} \frac{\nu(\dd r)}{1 + r} < \infty 
\end{equation}
and $c_0$ is a constant satisfying the inequalities $0 < c_0 \leq \infty$, defined by
the formula 
\begin{equation}
1/c_0 := \rho^{1/2}\,\lim_{p\rightarrow \infty} \left[p\, \tilde{J}(p)\right]^{1/2} = 
[\rho\, J_0]^{1/2}
\end{equation}
The physical dimension of $\kappa(p)$ and $\nu(\dd r)$ is 1/L. 
 
If $J_0 > 0$ then the constant $c_0$ is finite and it defines the wavefronts $\vert x\vert = c_0\, t$ such that 
$u(t,x) = 0$ for $t > \vert x \vert/c_0$, otherwise $c_0 = \infty$ and the solution 
$u(t,x)$ does not vanish anywhere in the space-time. The wavefront speed $c_0$ has the dimension
L/T \cite{HanWM2013}. In this paper we shall be interested in the case of $c_0 < \infty$.

Recall that every LICM function $\varphi$ has the integral representation 
\begin{equation} \label{eq:LICM}
\varphi(t) = a + \int_{]0,\infty[} \e^{-r\, t} \,\nu(\dd r)
\end{equation}
where $\nu$ is a positive Radon measure satisfying the inequality \eqref{eq:doss}
\cite{GripenbergLondenStaffans}.
Define the function $g$ by the formula
\begin{equation} \label{eq:KK2}
g(t) = \int_{]0,\infty[} \e^{-r\, t} \, \nu(\dd r)
\end{equation}
where the Radon measure $\nu$ is defined by eq.~\eqref{eq:kappa}. 
We then have an important formula
\begin{equation}\label{eq:important}
\kappa(p) = p/c_0 + p\, \tilde{g}(p)
\end{equation}
By Bernstein's Theorem and \eqref{eq:doss} the function 
$g$ is LICM and $\lim_{t\rightarrow \infty} g(t) = 0$. The dimension of $g(t)$ is 1/L. 
The function $g(t)$ assumes a finite value at 0 if $\nu$ has a finite mass. Note that 
any function $\kappa$ given by eq.~\eqref{eq:important}, where $g$ is a LICM function, is a complete Bernstein function.
Indeed, equation~\eqref{eq:CBF} holds for the Bernstein function $F(t) = 1/c_0 + \int_0^t g(s)\, \dd s$.

The Radon measure $\nu$ can be calculated using equation~\eqref{eq:kappa}. If $\nu(\dd r) = h(r)\, \dd r$, 
then
\begin{equation} \label{eq:hrkappa}
h(r) = \frac{1}{\upi} \im \left[ \kappa(p)/p\right]_{p=r \, \exp(-\ii \upi)}
\end{equation}
\cite{SerHanJMP,HanWM2013}, or, using equation~\eqref{eq:kappadef},
\begin{equation} \label{eq:hrJ}
h(r) = \frac{\rho^{1/2}}{\upi} \im \left\{ \left[p \, \tilde{J}(p)\right]^{1/2}\right\}
\end{equation}

Note that $\kappa(p) = p/c_0 + p^2 \, \tilde{F}(p)$,
where $F(t) = \int_0^t g(s)\,\dd s$ ($t \geq 0$). 
Since $g \in \mathcal{L}^1_{\mathrm{loc}}([0,\infty[)$, the function $F(t)$ is continuous on 
$[0,\infty[$ and $F(0) = 0$. The function $F(t)$ can therefore be extended to a continuous 
causal function $\Phi(t)$ on $\mathbb{R}$ and therefore it is a causal distribution. The distributional derivative 
$\phi$ of $\Phi$ is a causal distribution which coincides with the function $g(t)$ on $]0,\infty[$. The analytic 
continuation of $\kappa$ 
to the imaginary axis is the Fourier transform of a causal distribution $D^2\,\Psi(t)$
where $\Psi(t) := \theta(t)/c_0 + \Phi(t)$, hence the real part $\mathcal{A}(\omega) := \re \kappa(-\ii \omega)$ 
of $\kappa(-\ii \omega)$ (the attenuation function) 
and its imaginary part $\im \kappa(-\ii \omega) =: -\mathcal{D}(\omega)$ ($\mathcal{D}$ is the dispersion function) are related
by the Kramers-Kronig dispersion relations \cite{SerHanJMP}. 

In this paper the function $g(t)$ will be used to obtain asymptotic estimates 
and bounds for the solution $u(t,x)$ of the IVP (\ref{eq:mDlinpro}--\ref{eq:IVmDlinpro}).
It will also be seen that $u(t,x)$ has a discontinuity at the wavefront if and only if
$g(0+) := \lim_{t\rightarrow 0+} g(t)$ is finite. The function $g$ is bounded
at 0 if the attenuation-dispersion measure $\nu$ has finite mass; in this case
$g(0+)$ is equal to the total mass of $\nu$. This condition will turn out to be also equivalent 
to the inequality $J^\prime(0) < \infty$. The amplitude of the discontinuity 
will be expressed in terms of $g(0+)$ and also by $\lim_{t\rightarrow 0+} J^\prime(t)$. 

The function $g(t)$ will also be applied to derive an upper bound for the solution near 
the wavefront when $g(0+) = \infty$ and the solution $u(t,x)$ is continuous at the wavefront
for $t > 0$.

\section{Asymptotic behavior of Green's function near the wavefront.}

In this section an asymptotic estimate of Green's function near the wavefront will be derived.

\begin{theorem} \label{thm:BernsteinNew}
The function $x \, f(x)$ on $\mathbb{R}_+$ is CM if and only if there is a non-negative
right-continuous non-decreasing real function $g$ vanishing on $]-\infty,0[$, such that $f = \tilde{g}$.
\end{theorem}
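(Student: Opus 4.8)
The plan is to reduce the statement to Bernstein's theorem in its integral-representation form: a function $h$ on $\mathbb{R}_+$ is CM if and only if $h(x) = \int_{[0,\infty[} \e^{-x\,s}\,\mu(\dd s)$ for some positive Radon measure $\mu$ on $[0,\infty[$ for which this integral is finite at every $x>0$, and $\mu$ is then uniquely determined. Granting this, everything comes down to the elementary remark that, on the Laplace side, dividing $x\,f(x)$ by $x$ corresponds to replacing the representing measure $\mu$ by its distribution function $g(t):=\mu([0,t])$.

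For the ``if'' direction, suppose $g$ is non-negative, non-decreasing, right-continuous and vanishes on $]-\infty,0[$, with $f=\tilde g$; in particular $\tilde g(x)<\infty$ for $x>0$. Let $\mu$ be the Lebesgue--Stieltjes measure of $g$. Because $g$ vanishes on the negative half-line, $\mu$ is carried by $[0,\infty[$, with $\mu(\{0\})=g(0)$ by right continuity, and $\mu([0,t])=g(t)$ for $t\geq0$. Applying Tonelli's theorem to the non-negative integrand I would compute
\begin{equation*}
x\,\tilde g(x)=x\int_0^\infty \e^{-x\,t}\left(\int_{[0,t]}\mu(\dd s)\right)\dd t
= x\int_{[0,\infty[}\left(\int_s^\infty \e^{-x\,t}\,\dd t\right)\mu(\dd s)
=\int_{[0,\infty[}\e^{-x\,s}\,\mu(\dd s),
\end{equation*}
so that $x\,f(x)$ is the Laplace transform of the positive Radon measure $\mu$ and is therefore CM by Bernstein's theorem.

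For the ``only if'' direction, suppose $x\mapsto x\,f(x)$ is CM. By Bernstein's theorem there is a positive Radon measure $\mu$ on $[0,\infty[$ with $x\,f(x)=\int_{[0,\infty[}\e^{-x\,s}\,\mu(\dd s)<\infty$ for every $x>0$. Since $\e^{-x\,s}\geq \e^{-x\,t}$ on $[0,t]$, this gives $\mu([0,t])\leq \e^{x\,t}\,x\,f(x)<\infty$, so $g(t):=\mu([0,t])$ for $t\geq0$, extended by $0$ on $]-\infty,0[$, is a well-defined non-negative, non-decreasing, right-continuous real function vanishing on $]-\infty,0[$, and $\mu$ is exactly its Lebesgue--Stieltjes measure. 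The same estimate, valid for all $x>0$, also forces $\limsup_{t\to\infty} t^{-1}\ln g(t)\leq 0$, i.e. $g$ has sub-exponential growth, so $\tilde g(x)$ is finite for every $x>0$; the Tonelli computation above then yields $x\,\tilde g(x)=\int_{[0,\infty[}\e^{-x\,s}\,\mu(\dd s)=x\,f(x)$, hence $f=\tilde g$ (and uniqueness of $g$ follows from uniqueness of $\mu$).

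No step is a serious obstacle. The points to handle with a little care are the interchange of the two integrations — legitimate by non-negativity once one knows the relevant integral is finite, which is a hypothesis in one direction and the displayed estimate on $\mu([0,t])$ in the other — and the bookkeeping of a possible atom of $\mu$ at the origin, which appears as a jump of $g$ at $0$, consistent with the theorem requiring only $g\geq 0$ rather than $g(0+)=0$. The real content of the argument is simply the passage between a measure and its distribution function, packaged through Bernstein's theorem.
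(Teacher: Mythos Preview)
Your proof is correct and follows essentially the same approach as the paper: both directions reduce to Bernstein's theorem via the correspondence between the representing measure $\mu$ of $x\,f(x)$ and its distribution function $g$. The paper derives the key identity $x\,\tilde g(x)=\int_{[0,\infty[}\e^{-x s}\,\mu(\dd s)$ by Stieltjes integration by parts, whereas you obtain it by Tonelli, but the content is the same (and your treatment of the atom at $0$ and of the finiteness of $\tilde g$ is in fact a bit more careful).
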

\begin{proof}
If $f = \tilde{g}$ then 
$$ x\, f(x) = -\int_0^\infty \frac{\partial \e^{-x y}}{\partial y} g(y) \, \dd y =
g(0+) + \int_0^\infty \e^{-x y} \, \dd g(y).$$
Define the Radon measure $\mu$ by the formula $\mu(]0,y]) = g(y) - g(0+)$. The measure 
$\mu$ is positive,
hence the Bernstein Theorem implies that $x \, f(x)$ is CM. 

If $x \, f(x)$ is CM then, by the Bernstein Theorem, there is a positive Radon measure 
$\mu$ on $\overline{\mathbb{R}_+}$ such that
$$x\, f(x) = \mu(\{0\}) + \int_{]0,\infty[} \e^{- x y} \, \mu(\dd y)$$
If $g(y) = 0$ for $y < 0$, $g(0+) = \mu(\{0\})$ and $g(y) = \mu(]0,y])$ for $y > 0$
then $g(y) \geq 0$, $g$ is non-decreasing and 
$x \, f(x) = g(0+) + \int_{]0,\infty[} \e^{-x y}\, \dd g(y) = 
x \int_0^\infty \e^{-x y}\, g(y)\, \dd y$.

\end{proof}

\begin{lemma} \label{lem:exp}
If $(-1)^n \, f^{(n)}(x) \geq 0$ for $n = 1,2,\ldots$, then 
$$ (-1)^n \frac{\dd^n}{\dd x^n} \e^{f(x)} \geq 0 \qquad\text{for n =0,1,2,\ldots} $$
\end{lemma}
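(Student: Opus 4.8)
The plan is to prove the statement by induction on $n$, differentiating the elementary identity $\D\, \e^{f(x)} = f'(x)\, \e^{f(x)}$ repeatedly with the Leibniz rule; no integral (Bernstein) representation will be needed. Observe first that the hypothesis already forces $f$ to be $C^\infty$, since it asserts that every derivative $f^{(n)}$ with $n \geq 1$ exists; hence $u(x) := \e^{f(x)}$ is $C^\infty$ and all the derivatives written below are well defined. The base case $n = 0$ is immediate, as $u(x) = \e^{f(x)} > 0$.

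For the inductive step I would assume $(-1)^j u^{(j)}(x) \geq 0$ for $j = 0, 1, \dots, n$ and all $x$, and apply Leibniz's formula to $u' = f'\, u$ to get
\[
u^{(n+1)} = \sum_{k=0}^{n} \binom{n}{k}\, f^{(k+1)}\, u^{(n-k)} .
\]
Multiplying by $(-1)^{n+1}$ and splitting the sign as $(-1)^{n+1} = (-1)^{k+1}(-1)^{n-k}$ yields
\[
(-1)^{n+1}\, u^{(n+1)} = \sum_{k=0}^{n} \binom{n}{k}\,\bigl[(-1)^{k+1} f^{(k+1)}\bigr]\,\bigl[(-1)^{n-k} u^{(n-k)}\bigr] .
\]
In each summand the first bracket is $\geq 0$ by the hypothesis on $f$ (the index $k+1$ ranges over $1, \dots, n+1$), and the second bracket is $\geq 0$ by the inductive hypothesis (the index $n-k$ ranges over $0, \dots, n$); so the whole sum is $\geq 0$, which closes the induction.

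The only thing to watch is the sign bookkeeping in the Leibniz expansion — verifying that the global factor $(-1)^{n+1}$ distributes as $(-1)^{k+1}$ onto the $f$-factor and $(-1)^{n-k}$ onto the $u$-factor — but this reduces to the trivial identity $(k+1) + (n-k) = n+1$, so I do not expect any genuine obstacle. (Alternatively one could note that $-f'$ is completely monotone, hence $f$ differs from a Bernstein function by a constant, and invoke the known fact that the exponential of the negative of a Bernstein function is completely monotone; the direct induction above has the advantage of being self-contained and of not requiring non-negativity of $f$.)
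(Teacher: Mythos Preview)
Your proof is correct and is essentially the paper's own argument: both proceed by induction on $n$, differentiating the identity $u' = f'\, u$ to control the sign of $u^{(n+1)}$. The paper packages the inductive hypothesis as ``$u^{(n)} = P_n\, e^{f}$ with $P_n$ a positive-coefficient polynomial in the $f^{(k)}$ of total derivative order $n$,'' whereas you use the Leibniz expansion with strong induction on $(-1)^j u^{(j)} \geq 0$ --- a slightly cleaner bookkeeping of the same idea.
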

\begin{proof}
The thesis is obviously true for $n = 0$ and for $n = 1$:
$\dd \,\e^{f(x)}/\dd x = f^\prime(x) \, \e^{f(x)} \leq 0$.

We now assume that for a fixed $n \geq 1$ the identity 
$\dd^n \e^{f(x)}/\dd x^n = P_n(x)\, \e^{f(x)}$ holds, 
where $P_n(x)$ is a polynomial in the derivatives of $f$ with positive coefficients 
such that the sum of the
orders of all the derivatives in a monomial of $P_n$ is $n$. The assumption is
certainly true for $n = 1$. From this assumption it follows
that $\dd^{n+1} \e^{f(x)}/\dd x^{n+1} = P_{n+1}(x)\, \e^{f(x)}$, where 
$P_{n+1}(x) := P_n^\prime(x) + f^\prime(x)\,P_n(x)$. The function $P_{n+1}$ is 
a sum of products of derivatives of $f$ whose orders sum up to $n+1$ with positive coefficients. We have thus 
proved that our assumption is true for all integer $n \geq 1$. But the hypothesis of the lemma implies that
$(-1)^n \, P_n(x) \geq 0$, which proves the lemma.

\end{proof}

\begin{lemma} \label{lem:gpg1}
If $\varphi: \mathbb{R}_+ \rightarrow \mathbb{R}$ is differentiable, non-negative and non-increasing, 
$\lim_{t\rightarrow 0+} [t\, \varphi(t)] \rightarrow 0$ and the Laplace transform $\tilde{\varphi}(p)$ 
of $\varphi$ exists for $p > 0$, then $\exp\left( -p\, \tilde{\varphi}(p)\right)/p$ is
the Laplace function of a non-negative non-decreasing function 
$H: \mathbb{R}_+ \rightarrow \mathbb{R}$.
\end{lemma}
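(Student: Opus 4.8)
The plan is to chain Theorem~\ref{thm:BernsteinNew} and Lemma~\ref{lem:exp} and reduce the whole statement to a single integration by parts. By Theorem~\ref{thm:BernsteinNew}, applied to $f(p) := \e^{-p\,\tilde{\varphi}(p)}/p$ — for which $p\,f(p) = \e^{-p\,\tilde{\varphi}(p)}$ — the desired conclusion is \emph{equivalent} to the assertion that $p\mapsto \e^{-p\,\tilde{\varphi}(p)}$ is completely monotone (CM) on $\mathbb{R}_+$. To get that from Lemma~\ref{lem:exp}, I would set $\psi(p):=-p\,\tilde{\varphi}(p)$; since $\e^{\psi(p)}$ is exactly what must be CM, it suffices to verify the hypothesis of Lemma~\ref{lem:exp}, namely $(-1)^n\psi^{(n)}(p)\geq 0$ for $n=1,2,\dots$, and this is the same as requiring that $-\psi'(p)=\frac{\dd}{\dd p}\bigl[p\,\tilde{\varphi}(p)\bigr]$ be CM.

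So the whole proof rests on showing that $\frac{\dd}{\dd p}\bigl[p\,\tilde{\varphi}(p)\bigr]$ is CM. First observe that the assumed existence of $\tilde{\varphi}(p)$ for $p>0$ already forces $\varphi\in\mathcal{L}^1_{\loc}([0,\infty[)$ — on $[0,1]$ one has $\e^{-pt}\geq\e^{-p}$ — while $\varphi$, being non-increasing and non-negative, is bounded on $[1,\infty[$; hence one may differentiate $p\mapsto p\,\tilde{\varphi}(p)$ under the integral sign on $]0,\infty[$. Using $(1-pt)\,\e^{-pt}=\partial_t\bigl(t\,\e^{-pt}\bigr)$ and integrating by parts,
\[ \frac{\dd}{\dd p}\bigl[p\,\tilde{\varphi}(p)\bigr]=\int_0^\infty (1-pt)\,\e^{-pt}\,\varphi(t)\,\dd t = \Bigl[t\,\e^{-pt}\,\varphi(t)\Bigr]_0^\infty-\int_0^\infty t\,\e^{-pt}\,\varphi'(t)\,\dd t. \]
The endpoint term vanishes at $\infty$ because $\varphi$ is bounded there, and it vanishes at $0$ precisely because $\lim_{t\to 0+}t\,\varphi(t)=0$; this is the only place that hypothesis enters. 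Since $\varphi$ is non-increasing, $\mu(\dd t):=-t\,\varphi'(t)\,\dd t$ is a positive measure and
\[ \frac{\dd}{\dd p}\bigl[p\,\tilde{\varphi}(p)\bigr]=\int_{]0,\infty[}\e^{-pt}\,\mu(\dd t), \]
so Bernstein's theorem gives that this function is CM.

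With this in hand the proof closes mechanically. Since $-\psi'=\frac{\dd}{\dd p}[p\,\tilde{\varphi}(p)]$ is CM, $(-1)^k(-\psi')^{(k)}\geq 0$ for all $k\geq 0$, i.e.\ $(-1)^n\psi^{(n)}\geq 0$ for all $n\geq 1$, which is the hypothesis of Lemma~\ref{lem:exp}. Hence $\e^{-p\,\tilde{\varphi}(p)}=\e^{\psi(p)}$ is CM on $\mathbb{R}_+$, and Theorem~\ref{thm:BernsteinNew}, applied with $x\,f(x)=\e^{-x\,\tilde{\varphi}(x)}$, produces a non-negative, right-continuous, non-decreasing function $H$ vanishing on $]-\infty,0[$ with $\tilde{H}(p)=\e^{-p\,\tilde{\varphi}(p)}/p$, which is the assertion.

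The one genuinely delicate point is the integration by parts, since $\varphi$ is assumed only differentiable and monotone and $\varphi(0+)$ may be infinite: I would first integrate by parts over $[\varepsilon,R]\subset\,]0,\infty[$, where $\varphi$ is absolutely continuous, control the integrals by the monotonicity of $\varphi$, and then let $\varepsilon\to 0+$ and $R\to\infty$, the hypothesis $t\,\varphi(t)\to 0$ disposing of the lower endpoint and the boundedness of $\varphi$ near $\infty$ the upper one. That limiting argument — together with justifying differentiation under the integral near $t=0$ — is essentially the only obstacle; everything else is bookkeeping.
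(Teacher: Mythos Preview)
Your proof is correct and follows essentially the same route as the paper: both reduce the claim, via Theorem~\ref{thm:BernsteinNew} and Lemma~\ref{lem:exp}, to the single integration-by-parts identity $\frac{\dd}{\dd p}[p\,\tilde{\varphi}(p)]=-\int_0^\infty t\,\varphi'(t)\,\e^{-pt}\,\dd t$, with the hypothesis $t\,\varphi(t)\to 0$ killing the boundary term. The only cosmetic difference is that the paper then differentiates this identity $n-1$ more times to read off the signs directly, whereas you invoke Bernstein's theorem once to conclude complete monotonicity from the integral representation.
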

\begin{proof}
We shall first prove that $(-1)^n \, \dd^n \left[p\, \tilde{\varphi}(p)\right]/\dd p^n \leq 0$
for $n =1,2,\ldots$.

We begin with $n = 1$. Let $\varepsilon > 0$.
\begin{multline*}
\frac{\dd}{\dd p} \left[ p \int_\varepsilon^\infty \varphi(t)\,  \e^{-p t}\, \dd t\right] = 
-\frac{\dd}{\dd p} \left[ \int_\varepsilon^\infty \varphi(t)\, \dd \e^{-p t}\right]  =\\
\frac{\dd}{\dd p} \int_\varepsilon^\infty \varphi^\prime(t)\, \e^{-p t}\, \dd t - 
\frac{\dd}{\dd p} \left[\varphi(t)\, \e^{-p t}\right]^\infty_{t=\varepsilon} = 
-\int_\varepsilon t\, \varphi^\prime(t) \, \e^{-p t} \, \dd t + \varepsilon\, \varphi(\varepsilon)
\end{multline*}
Hence
\begin{equation}
\frac{\dd}{\dd p} \left[p \, \tilde{\varphi}(p) \right] = - \int_0^\infty t \, \varphi^\prime(t)\,
\e^{-p t} \, \dd t
\end{equation}

In particular we have also proved that $t\, \varphi^\prime(t)$ is locally integrable
on $[0,\infty[$. 

For arbitrary $n \geq 1$ we note that
$$\frac{\dd^n}{\dd p^n} \left[ p\, \tilde{\varphi}(p)\right] = -\frac{\dd^{n-1}}{\dd p^{n-1}} 
\int_0^\infty t\, \varphi^\prime(t) \, \e^{-p t} \, \dd t = (-1)^n \int_0^\infty t^n \, \varphi^\prime(t) \, \e^{-p t} \, \dd t$$
But $\varphi^\prime(t) \leq 0$ almost everywhere, hence 
\begin{equation}
(-1)^n \, \frac{\dd^n}{\dd p^n} \left[ p\, \tilde{\varphi}(p)\right] \leq 0, \qquad n = 1,2,\ldots
\end{equation}

Lemma~\ref{lem:exp} implies that 
$$(-1)^n \, \frac{\dd^n}{\dd p^n} \e^{-p\, \tilde{\varphi}(p)} \geq 0\qquad\text{for $n= 0,1,2,\ldots$}$$
Theorem~\ref{thm:BernsteinNew} now implies that 
\begin{equation}
\e^{-p\, \tilde{\varphi}(p)}/p = \int_0^\infty \e^{-p t} \, H(t)\, \dd t
\end{equation}
where the function $H$ is non-negative and non-decreasing.

\end{proof}

\begin{corollary} \label{cor:zcv}
If $\varphi(0+) = \lim_{t\rightarrow 0+} \varphi(t)$ exists then 
$$\lim_{p\rightarrow \infty} \e^{-p\, \tilde{\varphi}(p)} = \exp\left(-\lim_{p\rightarrow \infty}
\left[p\, \tilde{\varphi}(p)\right]\right) = \e^{-\varphi(0+)}$$ and
the limit $H(0+)$ exists and equals $\e^{-\varphi(0+)}$.

Similarly, if the limit $\varphi_\infty = \lim_{t\rightarrow \infty} \varphi(t)$ exists, then 
$\lim_{t\rightarrow \infty} H(t) = \e^{-\varphi_\infty}$.
\end{corollary}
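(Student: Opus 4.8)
The plan is to read off both limits of $H$ from the behavior of its Laplace transform $\widetilde{H}(p) = \e^{-p\,\tilde\varphi(p)}/p$ at $p\to\infty$ and $p\to 0+$, using the initial- and final-value theorems together with the monotonicity of $H$ supplied by Lemma \ref{lem:gpg1}. First I would establish the key auxiliary fact that $\lim_{p\to\infty} p\,\tilde\varphi(p) = \varphi(0+)$ whenever the right-hand limit exists. This is a standard Tauberian/Abelian statement: writing $p\,\tilde\varphi(p) = \int_0^\infty \varphi(s/p)\,\e^{-s}\,\dd s$ after the substitution $t = s/p$, one notes that $\varphi(s/p)\to\varphi(0+)$ pointwise as $p\to\infty$, and since $\varphi$ is non-increasing and non-negative it is bounded on $]0,\delta]$, so dominated convergence over $s\in]0,\infty[$ (splitting off the tail where $\varphi$ is integrable against $\e^{-s}$) gives the claim. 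The continuity of $x\mapsto\e^{-x}$ then yields $\lim_{p\to\infty}\e^{-p\,\tilde\varphi(p)} = \e^{-\varphi(0+)}$, which is the first displayed equality.

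Next I would pass from this Laplace-transform limit to the value $H(0+)$. Since $H$ is non-decreasing and non-negative (Lemma \ref{lem:gpg1}), the limit $H(0+) = \lim_{t\to 0+} H(t)$ exists in $[0,\infty]$. The initial-value theorem for the Laplace transform states that $\lim_{p\to\infty} p\,\widetilde{H}(p) = H(0+)$ when the latter exists; but $p\,\widetilde{H}(p) = \e^{-p\,\tilde\varphi(p)}$, whose limit we have just computed to be $\e^{-\varphi(0+)} < \infty$. Hence $H(0+) = \e^{-\varphi(0+)}$. For the monotone function $H$ this argument is clean because $p\,\widetilde{H}(p) = \int_0^\infty \e^{-p t}\,\dd\!\left(\text{the measure associated to }H\right)$ plus the atom $H(0+)$ at the origin, so the initial-value limit reduces to extracting that atom, mirroring the computation already done in the proof of Theorem \ref{thm:BernsteinNew}.

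For the last assertion I would argue symmetrically at $p\to 0+$. If $\varphi_\infty = \lim_{t\to\infty}\varphi(t)$ exists, then by the same change of variables $p\,\tilde\varphi(p) = \int_0^\infty\varphi(s/p)\,\e^{-s}\,\dd s \to \varphi_\infty$ as $p\to 0+$ (now $s/p\to\infty$; dominated convergence applies since $0\le\varphi\le\varphi(0+)$ on the relevant range, using that $\varphi$ is non-increasing), so $p\,\widetilde{H}(p) = \e^{-p\,\tilde\varphi(p)}\to\e^{-\varphi_\infty}$. The final-value theorem — applicable precisely because $H$ is monotone, so $\lim_{t\to\infty}H(t)$ exists in $[0,\infty]$ — then gives $\lim_{t\to\infty}H(t) = \lim_{p\to 0+} p\,\widetilde{H}(p) = \e^{-\varphi_\infty}$.

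The main obstacle is the rigorous justification of the two Abelian limits $\lim p\,\tilde\varphi(p)$: one must be careful that $\varphi$, though locally integrable with $t\varphi(t)\to 0$, need not be bounded near $0$ a priori — however, monotonicity saves us, since a non-increasing function with a finite right limit at $0$ is automatically bounded on a neighbourhood of $0$, and that is exactly the hypothesis "$\varphi(0+)$ exists." Once boundedness on the initial interval is in hand, dominated convergence is routine, and the initial-/final-value theorems apply verbatim to the monotone $H$. I would also remark that finiteness of $H(0+)$ is consistent with $H$ being the "primitive measure" of a genuine function only when $\nu$ has finite mass, tying back to the discussion preceding Theorem \ref{thm:BernsteinNew}, but this is a comment rather than part of the proof.
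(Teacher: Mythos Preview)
The paper gives no separate proof of this corollary; it is recorded as an immediate consequence of Lemma~\ref{lem:gpg1} together with the standard Abelian facts $p\,\tilde\varphi(p)\to\varphi(0+)$ as $p\to\infty$ and $p\,\tilde\varphi(p)\to\varphi_\infty$ as $p\to 0+$. Your argument supplies exactly these details and is correct in outline.

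There is one small slip worth fixing. In the second part you justify dominated convergence by the bound $0\le\varphi\le\varphi(0+)$, but that clause of the corollary does not assume $\varphi(0+)<\infty$; indeed, when the paper invokes this half of the statement later (equation~\eqref{eq:mln}) the function $\varphi$ may well be unbounded at $0$. The repair is immediate: for $p\le 1$ and $s>0$ the monotonicity of $\varphi$ gives $\varphi(s/p)\le\varphi(s)$, and $\int_0^\infty \varphi(s)\,\e^{-s}\,\dd s=\tilde\varphi(1)<\infty$ by the standing hypothesis that $\tilde\varphi$ exists, so $\varphi(s)\,\e^{-s}$ is a valid dominating function. Alternatively, since $H$ is non-decreasing, monotone convergence applied to $p\,\tilde H(p)=\int_0^\infty \e^{-s}\,H(s/p)\,\dd s$ yields $H(\infty)$ directly and sidesteps the issue.
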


\begin{lemma} \label{lem:xcv}
If the function $\varphi \in \mathcal{L}^1_\loc(\mathbb{R}_+)$ is non-negative, non-increasing 
and convex, $\lim_{t\rightarrow 0+} \left[t\,\varphi^\prime(t)\right] = 0$
then
$\lim_{p\rightarrow \infty} \left[p \, \tilde{\varphi}(p) - \varphi(1/p)\right] = 0$.
\end{lemma}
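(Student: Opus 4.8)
The plan is to exploit convexity of $\varphi$ to sandwich the quantity $p\,\tilde\varphi(p)$ between two expressions that both converge to $\varphi(0+)$ (allowing $\varphi(0+) = +\infty$), and then subtract $\varphi(1/p)$, which also converges to the same limit. The natural device is the scaling substitution $t = s/p$ in the Laplace integral, giving
\begin{equation*}
p\,\tilde\varphi(p) = \int_0^\infty \e^{-s}\,\varphi(s/p)\,\dd s,
\end{equation*}
so that $p\,\tilde\varphi(p) - \varphi(1/p) = \int_0^\infty \e^{-s}\,[\varphi(s/p) - \varphi(1/p)]\,\dd s$. The task reduces to showing this last integral tends to $0$ as $p\to\infty$, i.e. as the argument $1/p \to 0+$.

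First I would record the consequences of the hypotheses: $\varphi$ non-increasing and convex on $\mathbb{R}_+$ means $\varphi'$ exists a.e., is non-positive and non-decreasing, and the one-sided derivatives exist everywhere; the condition $\lim_{t\to 0+}[t\,\varphi'(t)] = 0$ controls how fast $\varphi$ can blow up near $0$ (it forces $\varphi(t) = \oo(|\ln t|)$, hence the Laplace transform exists and the integral $\int_0^\infty \e^{-s}\varphi(s/p)\,\dd s$ is finite for every $p > 0$, which is exactly the integrability statement already extracted inside the proof of Lemma~\ref{lem:gpg1}). Then, writing $\varphi(s/p) - \varphi(1/p) = \int_{1/p}^{s/p} \varphi'(\sigma)\,\dd\sigma$, I would bound this difference using monotonicity of $\varphi'$: for $s \geq 1$ one has $0 \leq \varphi(1/p) - \varphi(s/p) = \int_{1/p}^{s/p}|\varphi'(\sigma)|\,\dd\sigma \leq |\varphi'(1/p)|\cdot(s-1)/p = \OO(1)\cdot(s-1)\cdot o(1)$ by the hypothesis on $t\,\varphi'(t)$ evaluated at $t = 1/p$; for $0 < s \leq 1$ one has $0 \leq \varphi(s/p) - \varphi(1/p) \leq |\varphi'(s/p)|\cdot(1-s)/p \leq (p/s)\cdot o(s/p)\cdot(1-s)/p$, again using $\sigma\,\varphi'(\sigma) \to 0$ with $\sigma = s/p$.

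The main obstacle is handling the small-$s$ regime uniformly: near $s = 0$ the value $\varphi(s/p)$ may be large, and one needs the vanishing of $\sigma\,\varphi'(\sigma)$ to kill it after integration against $\e^{-s}$. Concretely, for $s \leq 1$ I would bound $|\varphi(s/p) - \varphi(1/p)|$ by the integral $\int_{s/p}^{1/p}|\varphi'(\sigma)|\,\dd\sigma$ and then use $|\varphi'(\sigma)| = |\sigma\varphi'(\sigma)|/\sigma \leq \eta(1/p)/\sigma$ where $\eta(t) := \sup_{0<\sigma\leq t}|\sigma\varphi'(\sigma)| \to 0$; this yields $|\varphi(s/p)-\varphi(1/p)| \leq \eta(1/p)\,\ln(1/s)$, and $\int_0^1 \e^{-s}\,\eta(1/p)\,\ln(1/s)\,\dd s = \eta(1/p)\cdot \OO(1) \to 0$. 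For $s \geq 1$, the bound $|\varphi(1/p)-\varphi(s/p)| \leq \int_{1/p}^{s/p}|\varphi'(\sigma)|\,\dd\sigma \leq \eta(s/p)\,\ln s \leq \eta_{\infty}\,\ln s$ is not quite enough on its own since $\eta$ need not be small at $s/p$ when $s$ is comparable to $p$; but here monotonicity of $\varphi'$ gives the sharper estimate $|\varphi'(\sigma)| \leq |\varphi'(1/p)| = p\,\eta_0(1/p)$ for $\sigma \geq 1/p$, whence $|\varphi(1/p)-\varphi(s/p)| \leq (s-1)\,\eta_0(1/p)$ with $\eta_0(t) := |t\,\varphi'(t)| \to 0$, and $\int_1^\infty \e^{-s}(s-1)\,\eta_0(1/p)\,\dd s = \eta_0(1/p)\cdot\OO(1)\to 0$. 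Combining the two regimes gives $|p\,\tilde\varphi(p) - \varphi(1/p)| \to 0$, which is the assertion. I would close by remarking that no separate argument is needed for the case $\varphi(0+) = \infty$, since the estimates above bound the \emph{difference} directly without reference to the common limit.
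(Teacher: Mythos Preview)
Your argument is correct and follows essentially the same route as the paper: the same scaling $t=s/p$, the same split at $s=1$, and for $s\ge 1$ exactly the same convexity bound $|\varphi(1/p)-\varphi(s/p)|\le (s-1)\,p^{-1}|\varphi'(1/p)|$. The only cosmetic difference is on the piece $0<s\le 1$: the paper drops the factor $\e^{-s}$, integrates by parts to reach $-p\int_0^{1/p} q\,\varphi'(q)\,\dd q$, and applies the mean value theorem, whereas you use the pointwise bound $|\varphi'(\sigma)|\le \eta(1/p)/\sigma$ to get an integrable $\ln(1/s)$ majorant; both devices extract the same information from the hypothesis $t\,\varphi'(t)\to 0$.
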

\begin{proof}
The Laplace transform $\tilde{\varphi}$ of $\varphi$ is defined and
\begin{multline*}
p \, \tilde{\varphi}(p) - \varphi(1/p) = p \int_0^\infty \varphi(t)\, \e^{-p t} \, \dd t- 
\varphi(1/p) = \\ \int_0^\infty \left[\varphi(t/p) - \varphi(1/p)\right]\, \e^{-t} \, \dd t = 
I_1 + I_2
\end{multline*}

We begin with an estimate of 
$I_1 = \int_0^1 \left[\varphi(t/p) - \varphi(1/p)\right]\, \e^{-t} \, \dd t$. The function 
$\varphi$ is non-decreasing and $t \leq 1$, the integrand is non-negative, hence $I_1 \geq 0$.
On the other hand 
\begin{multline*}
I_1 \leq \int_0^1 \left[\varphi(t/p) - \varphi(1/p) \right] \, \dd t =
p \int_0^{1/p} \varphi(q) \, \dd q -\varphi(1/p) = -p \int_0^{1/p} q\, \varphi^\prime(q)  \,
 \dd q  = \\ -p \int_0^{1/p} q \, \varphi^\prime(q) \, \dd q + \left[p \, q \, \varphi(q)\right]_{q=0}^{q=1/p} - \varphi(1/p) = -p \int_0^{1/p} q\, \varphi^\prime(q) \, \dd q 
\end{multline*}
By Rolle's Theorem there is a $q_1 \in [0,1/p]$ such that 
$$I_1 \leq -q_1\, \varphi^\prime(q_1)$$
If $p \rightarrow \infty$, then $q_1 \rightarrow 0$ and, by assumption, $q_1 \, \varphi^\prime(q_1) \rightarrow 0$. Consequently $I_1 \rightarrow 0$ as $p \rightarrow \infty$.

We now turn to $I_2 := \int_1^\infty \left[\varphi(t/p) - \varphi(1/p) \right] \, \dd t$.
For $t > 1$ we have $0 \leq \varphi(1/p) - \varphi(t/p) = 
\int_{1/p}^{t/p} \vert \varphi^\prime(q)\vert \, \dd q$. Convexity of $\varphi$ implies that $\vert \varphi^\prime(q)\vert$ is non-increasing, hence      
$0 \leq \varphi(1/p) - \varphi(t/p) \leq (t/p) \, \vert \varphi^\prime(1/p)\vert$.
Hence $$0 \leq -I_2 \leq \frac{1}{p} \vert \varphi^\prime(1/p)\vert  \int_1^\infty t \, \e^{-t} \, \dd t.$$
On account of our assumptions the first factor on the right-hand side tends to 0 as $p \rightarrow \infty$. 
Hence $I_2 \rightarrow 0$ as $p \rightarrow \infty$, which proves the thesis.

\end{proof}

\begin{lemma}\label{lem:ycv}
Let $0 < \lambda \leq 1$. If $\varphi$ is non-negative, differentiable, non-increasing and convex and 
$t \varphi^\prime(t) \rightarrow 0$ for $t \rightarrow 0$,
then  $0 \leq \varphi(\lambda/p) - \varphi(1/p) \rightarrow 0$ as $p \rightarrow \infty$.
\end{lemma}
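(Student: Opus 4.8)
The plan is to estimate the difference $\varphi(\lambda/p) - \varphi(1/p)$ directly using the fundamental theorem of calculus, exactly as in the treatment of $I_2$ in the proof of Lemma~\ref{lem:xcv}. Since $0 < \lambda \leq 1$ we have $\lambda/p \leq 1/p$, and monotonicity of $\varphi$ gives $\varphi(\lambda/p) \geq \varphi(1/p)$, so the quantity is non-negative; it remains to bound it from above and show it tends to $0$.

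First I would write
\begin{equation*}
0 \leq \varphi(\lambda/p) - \varphi(1/p) = \int_{\lambda/p}^{1/p} \vert \varphi^\prime(q)\vert \, \dd q.
\end{equation*}
Then I would use convexity of $\varphi$, which forces $\vert \varphi^\prime\vert$ to be non-increasing, to bound the integrand on the interval $[\lambda/p, 1/p]$ by its value at the left endpoint: $\vert \varphi^\prime(q)\vert \leq \vert \varphi^\prime(\lambda/p)\vert$. This yields
\begin{equation*}
0 \leq \varphi(\lambda/p) - \varphi(1/p) \leq \frac{1-\lambda}{p}\, \vert \varphi^\prime(\lambda/p)\vert \leq \frac{1}{\lambda}\cdot \frac{\lambda}{p}\, \vert \varphi^\prime(\lambda/p)\vert.
\end{equation*}
Setting $q = \lambda/p$, we have $q \to 0$ as $p \to \infty$, and the hypothesis $t\,\varphi^\prime(t) \to 0$ as $t \to 0$ gives $q\,\vert\varphi^\prime(q)\vert \to 0$. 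Since $1/\lambda$ is a fixed positive constant, the upper bound tends to $0$, and the squeeze establishes the claim.

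The argument is essentially routine and presents no serious obstacle; the only point requiring care is the direction of the inequality (because $\lambda \leq 1$, the argument $\lambda/p$ is \emph{smaller} than $1/p$, so it is $\varphi(\lambda/p)$ that dominates), and the observation that convexity is exactly what licenses replacing $\vert\varphi^\prime(q)\vert$ by its endpoint value over the shrinking interval. One could alternatively note that $\varphi$, being monotone and bounded near a point where it need not be continuous a priori, is in fact continuous on $]0,\infty[$ by convexity, so the pointwise limit is legitimate; but the explicit integral bound above sidesteps any such subtlety. I would also remark that this lemma is the natural companion to Lemma~\ref{lem:xcv} and will presumably be combined with it (via $p\,\tilde\varphi(p) - \varphi(\lambda/p) = [p\,\tilde\varphi(p) - \varphi(1/p)] + [\varphi(1/p) - \varphi(\lambda/p)]$) to show that $\varphi$ evaluated at \emph{any} scale $\lambda/p$ captures the leading asymptotics of $p\,\tilde\varphi(p)$.
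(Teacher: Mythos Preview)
Your proof is correct and follows exactly the same route as the paper's: writing the difference as $\int_{\lambda/p}^{1/p}\vert\varphi'(q)\vert\,\dd q$, bounding the integrand by its left-endpoint value via convexity, and then reducing to the hypothesis $t\,\varphi'(t)\to 0$. The chain of inequalities you give is identical to the paper's up to the harmless intermediate step $\frac{1-\lambda}{p}\leq\frac{1}{p}$.
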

\begin{proof}
\begin{multline*}
0 \leq \varphi(\lambda/p) - \varphi(1/p) = \int_{\lambda/p}^{1/p} \vert \varphi^\prime(q)\vert \, \dd q 
\leq \frac{1-\lambda}{p} \vert \varphi^\prime(\lambda/p)\vert \leq \\
\frac{1}{p} \vert \varphi^\prime(\lambda/p)\vert = 
\frac{1}{\lambda} \frac\lambda{p} \vert \varphi^\prime(\lambda/p)\vert \rightarrow 0
\end{multline*}
in view of the last hypothesis.
 
\end{proof} 

We are now ready to investigate the asymptotic behavior of the function $H(t,r)$ at $t = 0$.
\begin{theorem} \label{thm:asymp}
\begin{equation} \label{eq:asymp}
H(t,r) \sim_{t\rightarrow 0} \e^{-g(t)\, r}, \qquad t > 0
\end{equation} 
\end{theorem}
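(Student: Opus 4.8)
The plan is to recognise $H(\cdot,r)$ as an instance of Lemma~\ref{lem:gpg1}, to read off the large-$p$ behaviour of its Laplace transform from Lemmas~\ref{lem:xcv}--\ref{lem:ycv}, and then to invert by a Tauberian argument that exploits the monotonicity of $H$. First I would fix $r>0$ and apply Lemma~\ref{lem:gpg1} to $\varphi(t):=g(t)\,r$. This $\varphi$ is admissible: $g$ is LICM, hence smooth, non-negative and non-increasing on $]0,\infty[$; the representation \eqref{eq:KK2} together with \eqref{eq:doss} forces $\lim_{t\to0+}[t\,g(t)]=0$ (cut the mass of $\nu$ at a large fixed $R$: the part on $]0,R]$ contributes $\OO(t\,\nu(]0,R]))\to0$, while the part on $]R,\infty[$ is dominated by $\int_{]R,\infty[}\nu(\dd s)/(1+s)$, which is small); and $\tilde\varphi=r\,\tilde g$ exists for $p>0$ since $g\in\mathcal{L}^1_\loc$. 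Hence $H(\cdot,r)$ is non-negative, non-decreasing, bounded with $H(\infty,r)=1$ and $H(0+,r)=\e^{-g(0+)\,r}$ (Corollary~\ref{cor:zcv}), and $p\,\tilde H(p,r)=\e^{-r\,p\,\tilde g(p)}$. If $g(0+)<\infty$ the theorem is then immediate, both sides tending to $\e^{-g(0+)\,r}$; so the substance lies in the case $g(0+)=\infty$, where one must capture the exact rate of decay.

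Next I would pin down the asymptotics of $p\,\tilde H(p,r)$. Being LICM, $g$ is convex, so Lemma~\ref{lem:xcv} applied to $\varphi=rg$ gives $p\,\tilde\varphi(p)-\varphi(1/p)=r\,[p\,\tilde g(p)-g(1/p)]\to0$, whence
\[ p\,\tilde H(p,r)=\e^{-r\,g(1/p)}\bigl(1+\oo(1)\bigr),\qquad p\to\infty. \]
Lemma~\ref{lem:ycv} yields $g(\lambda/p)-g(1/p)\to0$ for $\lambda\in\,]0,1]$, and the range $\lambda>1$ follows by applying it with $1/\lambda$ and $\lambda/p$ in place of $\lambda$ and $1/p$; hence $L(p):=\e^{-r\,g(1/p)}$ is slowly varying at infinity.

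Finally I would invert. Since $H(\cdot,r)$ is bounded and non-decreasing, $p\,\tilde H(p,r)=\int_{[0,\infty[}\e^{-pt}\,\dd H(t,r)$ is the Laplace--Stieltjes transform of the positive measure $\dd H(\cdot,r)$, and the display above says it is asymptotic to the slowly varying $L(p)$. Karamata's Tauberian theorem (the $p\to\infty$, exponent-$0$ form; see e.g.\ \cite{GripenbergLondenStaffans}) then gives $H(t,r)=(\dd H(\cdot,r))([0,t])\sim L(1/t)=\e^{-g(t)\,r}$ as $t\to0+$, which is \eqref{eq:asymp}. To stay closer to the elementary spirit of the preceding lemmas one can instead argue by sandwiching: for $0<\lambda<1$, monotonicity gives
\[ \frac{K(t)-K(\lambda t)}{t(1-\lambda)}\;\le\;H(t,r)\;\le\;\frac{\lambda\,[K(t/\lambda)-K(t)]}{t(1-\lambda)},\qquad K(t):=\int_0^t H(s,r)\,\dd s, \]
with $\tilde K(p)=\e^{-p\tilde\varphi(p)}/p^2\sim L(p)/p^2$; combining the (Tauberian) equivalence $K(t)\sim t\,\e^{-g(t)r}$ with the slow variation of $L$ and then letting $\lambda\to1$ squeezes $H(t,r)/\e^{-g(t)r}$ to $1$.

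The hard part is this last, Tauberian, step. The Abelian half --- the value $H(0+,r)=\e^{-g(0+)r}$ --- is cheap; extracting the sharp rate requires both the slow variation of $\e^{-g(1/p)}$ furnished by Lemma~\ref{lem:ycv} and the monotonicity of $H$, and it is precisely here that the regularity feeding Lemmas~\ref{lem:xcv}--\ref{lem:ycv} (convexity of $g$, which is automatic, together with $t\,g'(t)\to0$) is indispensable --- without slow variation one only gets a bound $H(t,r)\le C\,\e^{-g(t)r}$ rather than the equivalence. A minor technical point to watch is that the additive error in Lemma~\ref{lem:xcv} be genuinely $\oo(1)$, so that after exponentiation it becomes the factor $1+\oo(1)$ above; this is unproblematic since it is an additive error in the exponent.
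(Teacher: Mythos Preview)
Your proof is correct and follows essentially the same route as the paper's: both verify the hypotheses on $\varphi=r\,g$, use Lemmas~\ref{lem:xcv}--\ref{lem:ycv} to show that $\e^{-p\,\tilde\varphi(p)}$ (equivalently, your $L(p)=\e^{-r\,g(1/p)}$, which differs from it by a factor $1+\oo(1)$) is slowly varying at infinity, and then invoke Karamata's Tauberian theorem to conclude $H(t,r)\sim_0 \e^{-g(t)\,r}$. The only differences are cosmetic --- you apply Karamata in Laplace--Stieltjes form to $\dd H$ and supply the extra verification of $t\,g(t)\to0$ and the sandwiching alternative, whereas the paper applies it directly to $\tilde H(p,r)=l(p)/p$.
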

\begin{proof}
The Laplace transform of $H(t,r)$ is $\exp(-p\, \tilde{\varphi}(p))/p$, where 
$\varphi(t) = g(t)\, r$. We now consider the limit at $t \rightarrow 0$ of the
ratio
$$R := \exp\left(-(p/\lambda) \, \tilde{\varphi}(p/\lambda)  \right)/
\exp\left(-p \; \tilde{\varphi}(p)\right),$$
where $\lambda > 0$ is arbitrary. Lemma~\ref{lem:ycv} implies that the limit of R at $p = \infty$ will
not change if we divide it by $\e^{-\varphi(\lambda/p) + \varphi(1/p)}$.
Hence 
$$\lim_{p\rightarrow\infty} R = \lim_{p\rightarrow \infty} \exp\left(-p/\lambda\;\tilde{\varphi}(p/\lambda) + \varphi(\lambda/p)\right) \times \lim_{p\rightarrow\infty} \exp\left(p\,
\tilde{\varphi}(p) - \varphi(1/p) \right)$$
Lemma~\ref{lem:xcv} implies that both limits on the right-hand side are equal to 1,
hence $\lim_{p\rightarrow \infty} R = 1$ for all $\lambda > 0$. Consequently the function 
$l(p) := \exp\left(-p \; \tilde{\varphi}(p)\right)$ is slowly varying at infinity.

But $\tilde{H}(p,r) = l(p)/p$ and by the Karamata Tauberian theorem (Appendix)
$H(t,r) \sim_0 l(1/t)  \equiv \e^{-\tilde{\varphi}(1/t)/t}$. By Lemma~\ref{lem:xcv}
$H(t,r) \sim_0 \e^{-\varphi(t)} \equiv \e^{-g(t)\, r}$.

\end{proof} 

It remains to link the estimates of $H(t,r)$ to Green's function.

\begin{theorem} \label{thm:convrel}
Let 
$$f(t) = \int_0^t g(s) \, \dd s.$$

Then
\begin{equation} \label{eq:convrel}
u(t,x) = \frac{1}{2 \rho} \left[ H(t - \vert x \vert/c_0) + f(t)\ast H(t - \vert x \vert/c_0)\right]
\end{equation}
and
\begin{equation}
u(t,x) = \frac{1}{2 \rho} H(t - \vert x \vert/c_0) \, [1 + \OO[t - \vert x \vert/c_0]
\end{equation}
\end{theorem}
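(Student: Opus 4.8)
The plan is to work from the contour-integral representation~\eqref{eq:Green1D} and the factorization~\eqref{eq:important}. Writing $\kappa(p) = p/c_0 + p\,\tilde g(p)$, one splits the exponential in~\eqref{eq:Green1D} as $\e^{p(t-\vert x\vert/c_0)}\,\e^{-p\,\tilde g(p)\,\vert x\vert}$, so that the $\vert x\vert$-dependence enters only through the shifted time $t-\vert x\vert/c_0$ and through the factor $\e^{-p\,\tilde g(p)\,\vert x\vert}$. The idea is to recognize the prefactor $\kappa(p)/(2\rho p^2)$ and combine it with $\e^{-p\,\tilde g(p)\,\vert x\vert}$ so that the whole integrand becomes $\e^{p(t-\vert x\vert/c_0)}$ times a function whose inverse Laplace transform is explicitly $H(\cdot,\vert x\vert)$ convolved with something built out of $f$. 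Concretely, substituting $\kappa(p) = p/c_0 + p\tilde g(p)$ gives
$$\frac{\kappa(p)}{2\rho p^2}\,\e^{-p\,\tilde g(p)\,\vert x\vert} = \frac{1}{2\rho}\left[\frac{1}{c_0 p} + \frac{\tilde g(p)}{p}\right]\e^{-p\,\tilde g(p)\,\vert x\vert}.$$
Now Lemma~\ref{lem:gpg1} applied with $\varphi(t) = g(t)\,\vert x\vert$ (the hypotheses hold because $g$ is LICM, hence non-negative, non-increasing and convex, and $t\,g(t)\to 0$ since $g\in\mathcal L^1_\loc$) tells us that $\e^{-p\,\tilde\varphi(p)}/p = \tilde H(p,\vert x\vert)$. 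So the first term is $\frac{1}{2\rho c_0}\tilde H(p,\vert x\vert)$ and, since $\tilde g(p)/p\cdot\e^{-p\tilde g(p)\vert x\vert} = \widetilde{f}(p)\cdot p\,\tilde H(p,\vert x\vert) = \widetilde{f}(p)\,\widetilde{\D H}(p,\vert x\vert)$ up to the value $H(0+)$, one gets, after inverting, a combination of $H(t-\vert x\vert/c_0,\vert x\vert)$ and $f\ast H(t-\vert x\vert/c_0,\vert x\vert)$ matching~\eqref{eq:convrel}. The time shift $t\mapsto t-\vert x\vert/c_0$ comes from the leftover factor $\e^{p(t-\vert x\vert/c_0)}$ and the causality of $H$.

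The first displayed identity~\eqref{eq:convrel} is then essentially bookkeeping: carefully track the constant $H(0+)$ arising from $p\,\tilde H(p)= \widetilde{\D H}(p) + H(0+)$, verify that the $1/c_0$ term and the $H(0+)$ term combine correctly (using Corollary~\ref{cor:zcv}, $H(0+,\vert x\vert) = \e^{-g(0+)\vert x\vert}$, which may be infinite or zero but the formula is to be read distributionally), and confirm that the Bromwich integral indeed inverts term by term — which is legitimate because each factor is a genuine Laplace transform with the right decay, and $\varepsilon>0$ can be pushed to the contour of absolute convergence.

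For the second formula, the plan is to insert the asymptotic equivalence from Theorem~\ref{thm:asymp}, $H(s,r)\sim_{s\to0}\e^{-g(s)r}$, together with $f(t) = \int_0^t g(s)\,\dd s = \OO(t\,g(t)) = \oo(1)$ as $t\to0$ (using $g\in\mathcal L^1_\loc$). In the convolution $f\ast H(\cdot-\vert x\vert/c_0)$, evaluated at time $t$ with $s := t-\vert x\vert/c_0\to0+$, one writes $f\ast H$ as $\int_0^s f(s-\sigma)H(\sigma,\vert x\vert)\,\dd\sigma$; since $f(s-\sigma)\le f(s)\to0$ and $H(\cdot,\vert x\vert)$ is non-negative non-decreasing hence bounded near $0$, this convolution is $\OO(s)\cdot H(s,\vert x\vert)$ — more precisely $f\ast H(s,\vert x\vert) = H(s,\vert x\vert)\,\OO(s)$ — so that $u(t,x) = \frac{1}{2\rho}H(t-\vert x\vert/c_0,\vert x\vert)\,[1+\OO(t-\vert x\vert/c_0)]$.

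The main obstacle I anticipate is the rigorous justification of the term-by-term inversion of the Bromwich integral and, relatedly, the correct handling of the possibly-infinite constant $H(0+,\vert x\vert) = \e^{-g(0+)\vert x\vert}$: when $g(0+)=\infty$ this is $0$ and $H$ is continuous at the wavefront, when $g(0+)<\infty$ it produces the genuine jump discontinuity advertised in the introduction, and the algebra relating $\kappa(p)/(2\rho p^2)$, the $1/c_0$ term, and $\tilde g(p)/p$ must be arranged so that the same closed formula~\eqref{eq:convrel} covers both regimes. A secondary technical point is checking that $\varphi(t)=g(t)\,\vert x\vert$ really meets all hypotheses of Lemmas~\ref{lem:gpg1}, \ref{lem:xcv} and~\ref{thm:asymp} uniformly enough in $\vert x\vert$ for the $\OO$-estimate to be meaningful, but since $\vert x\vert$ is merely a fixed positive scaling factor this should reduce to the single-variable statements already proved.
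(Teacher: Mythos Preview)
Your approach is essentially the paper's: pass to the Laplace domain via \eqref{eq:Green1D} and \eqref{eq:important}, split $\kappa(p)/p^2 = 1/(p\,c_0) + \tilde g(p)/p$, recognize $\e^{-p\,\tilde g(p)\,\vert x\vert}/p$ as $\tilde H$, invert, and then bound the convolution term using that $H(\cdot,r)$ is non-decreasing. The paper is terser---it does not route the second term through $\widetilde{\D H}$ and $H(0+)$ but simply reads off the convolution from $\tilde g(p)/p = \tilde f(p)$, and it treats the Bromwich inversion as routine Laplace-transform algebra without the rigor caveats you raise.
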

\begin{proof}
The function $f$ is well-defined because $g \in \mathcal{L}^1_\loc([0,\infty[)$ and 
$f(t) \geq 0$, $f(0) = 0$.

It follows from eqs~\eqref{eq:Green1D} and \eqref{eq:important}  that 
$$\tilde{u}(p,x) = \frac{\kappa(p)}{2 \rho\, p^2} \e^{-p \,\vert x \vert/c_0}\, \e^{-p\, \tilde{g}(p)\, 
\vert x \vert}$$ 
But $\kappa(p) \sim_\infty p/c_0$ and
$$\frac{\kappa(p)}{p^2}-\frac{1}{p \, c_0} = \frac{\tilde{g}(p)}{p}$$
This proves eq.~\eqref{eq:convrel}.

The function $H(\cdot,r)$ is non-decreasing, hence
$$f(t)\ast H(t - r/c_0) = \int_0^{t - r/c_0} f(s)\, H(t -s - r/c_0) \, \dd s \leq 
H(t - r/c_0) \int_0^{t-r/c_0} f(s)\, \dd s$$
The integral on the right-hand side is $\OO[(t - r/c_0)]$. 
This ends the proof.

\end{proof}

For $ \vert t - r/c_0\vert \leq C_1$ the second term in the brackets is bounded from above
by a number $C_2$,
hence $\vert u(t,x)\vert \leq C \, H(t - \vert x \vert/c_0)$ for some constant 
$C > 0$. The last equation can
be used to determine upper bounds on Green's function near the wavefront. 

The function $g(t - r/c_0)\, r$ is a local phase of Green's function near its wavefront.

\section{Upper bound for Green's function near a wavefront in media with 
unbounded $g(t)$.} 

We recall that Green's functions of media with unbounded $g(t)$ do not have discontinuities.

\begin{lemma} \label{lem:gbound}
If the function $\varphi(t)$ is non-increasing and differentiable, 
$\lim_{t\rightarrow 0} [t \, \varphi(t)] = 0$,  $\;\varphi(t) \geq A $ for
some real constant $A$,  and 
the function $-t \, \varphi^\prime(t)$ is 
non-increasing, then the inverse Laplace transform of $\exp(-p\, \tilde{\varphi}(p))/p$ is 
bounded from above by $\exp(-\varphi(t))$.   
\end{lemma}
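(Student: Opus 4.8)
The plan is to realise the inverse Laplace transform $H:=\mathcal{L}^{-1}\bigl[\exp(-p\,\tilde{\varphi}(p))/p\bigr]$ — a well-defined non-negative non-decreasing function by Lemma~\ref{lem:gpg1} — as a convolution $\dd H=m_1\ast m_2$ of two positive measures, the second of which is supported in $\{0\}\cup\,]t,\infty[$ and carries the atom $\e^{-\varphi(t)}$ at the origin. Reading off the mass in $[0,t]$ then yields the bound.

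First I would reduce to $\varphi\ge0$: replacing $\varphi$ by $\varphi-A$ multiplies both $\exp(-p\,\tilde{\varphi}(p))/p$ and $\e^{-\varphi(t)}$ by $\e^{A}$, so the assertion is unchanged; moreover $\varphi_\infty:=\lim_{t\to\infty}\varphi(t)$ then exists in $[0,\infty[$, since $\varphi$ is non-increasing and bounded below. Write $L(p):=p\,\tilde{\varphi}(p)$, so that $\mathcal{L}[\dd H](p)=p\,\tilde{H}(p)=\e^{-L(p)}$. Integrating the identity $\dd[p\,\tilde{\varphi}(p)]/\dd p=-\int_0^\infty t\,\varphi'(t)\,\e^{-pt}\,\dd t$ from the proof of Lemma~\ref{lem:gpg1}, and using $\lim_{t\to0}[t\,\varphi(t)]=0$ together with $\lim_{p\to0}p\,\tilde{\varphi}(p)=\varphi_\infty$, gives
\begin{equation*}
L(p)=\varphi_\infty+\int_{]0,\infty[}\bigl(1-\e^{-pr}\bigr)\bigl(-\varphi'(r)\bigr)\,\dd r .
\end{equation*}
Now fix $t>0$ and split the integral at $r=t$, setting
\begin{equation*}
L_1(p):=\int_{]0,t]}\bigl(1-\e^{-pr}\bigr)\bigl(-\varphi'(r)\bigr)\,\dd r,\qquad
L_2(p):=\varphi_\infty+\int_{]t,\infty[}\bigl(1-\e^{-pr}\bigr)\bigl(-\varphi'(r)\bigr)\,\dd r ,
\end{equation*}
so that $L=L_1+L_2$. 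Differentiating under the integral shows the derivatives of $L_1$ and $L_2$ alternate in sign, i.e. $L_1,L_2$ are Bernstein functions, so Lemma~\ref{lem:exp} applied to $f=-L_i$ shows that $\e^{-L_1(p)}$ and $\e^{-L_2(p)}$ are completely monotone, hence, by Bernstein's theorem, Laplace transforms of positive measures $m_1,m_2$. From $\e^{-L}=\e^{-L_1}\e^{-L_2}$ and uniqueness of Laplace transforms, $\dd H=m_1\ast m_2$, whence $H(t)=\int_{[0,t]}m_1([0,t-s])\,m_2(\dd s)$; also $m_1([0,\infty[)=\lim_{p\to0}\e^{-L_1(p)}=1$.

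The heart of the argument is the structure of $m_2$. Since $\varphi(t)=\varphi_\infty+\int_{]t,\infty[}(-\varphi')$ and $q(p):=\int_{]t,\infty[}\e^{-pr}(-\varphi'(r))\,\dd r$ is the Laplace transform of the non-negative $L^1$ function $(-\varphi')\,\mathbf 1_{]t,\infty[}$ (of total integral $\varphi(t)-\varphi_\infty$), we get
\begin{equation*}
\e^{-L_2(p)}=\e^{-\varphi(t)}\,\e^{q(p)}=\e^{-\varphi(t)}\Bigl(1+\sum_{k\ge1}\tfrac{1}{k!}\,q(p)^k\Bigr).
\end{equation*}
For $k\ge1$ the term $q(p)^k$ is the Laplace transform of the $k$-fold convolution of $(-\varphi')\mathbf 1_{]t,\infty[}$, a non-negative function supported in $[kt,\infty[\subseteq[t,\infty[$, and the series converges in $L^1$ (with norm $\le\e^{\varphi(t)-\varphi_\infty}-1$); hence $m_2=\e^{-\varphi(t)}(\delta_0+w)$ with $w\ge0$ supported in $[t,\infty[$. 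Thus $m_2$ has no mass in $]0,t]$ and $m_2(\{0\})=\e^{-\varphi(t)}$. Substituting into $H(t)=\int_{[0,t]}m_1([0,t-s])\,m_2(\dd s)$, only the atom at $s=0$ contributes, so
\begin{equation*}
H(t)=m_1([0,t])\,\e^{-\varphi(t)}\le m_1([0,\infty[)\,\e^{-\varphi(t)}=\e^{-\varphi(t)} ,
\end{equation*}
which is the claim.

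The delicate point is the support statement for $m_2$: one must justify the term-by-term Laplace inversion of the exponential series (which follows from the $L^1$-convergence of $\sum_{k\ge1}q(p)^k/k!$, uniformly in $p>0$), recognise the $k=0$ term as the atom $\e^{-\varphi(t)}\delta_0$, and note that the convolution powers of the $L^1$ function $(-\varphi')\mathbf 1_{]t,\infty[}$ carry no atoms, so that indeed $m_2(]0,t])=0$. Everything else — the integral representation of $L$, the Bernstein property of $L_1,L_2$, complete monotonicity of $\e^{-L_i}$ via Lemma~\ref{lem:exp}, and the convolution theorem for Laplace transforms — is routine. (Incidentally, the hypothesis that $-t\,\varphi'(t)$ be non-increasing does not seem to be needed for this particular argument.)
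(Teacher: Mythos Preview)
Your proof is correct and follows a genuinely different route from the paper's. The paper first establishes the Volterra-type identity $t\,f(t)=\int_0^t(-s\,\varphi'(s))\,f(t-s)\,\dd s$ (with $f=H'$) by matching Laplace transforms, then uses the hypothesis that $-t\,\varphi'(t)$ is non-increasing to bound the kernel from below by $-t\,\varphi'(t)$, obtaining the differential inequality $-\varphi'(t)\le \dd(\ln F(t))/\dd t$; integrating over $[t,\infty[$ and using $F(\infty)=\e^{-\varphi_\infty}$ yields $F(t)\le \e^{-\varphi(t)}$. You instead read $L(p)=p\,\tilde\varphi(p)$ as a Bernstein function with L\'evy density $-\varphi'$ and killing rate $\varphi_\infty$, split the L\'evy measure at the level $r=t$, and exploit the resulting convolution factorisation $\dd H=m_1\ast m_2$ together with the compound-Poisson structure of $m_2$ (atom $\e^{-\varphi(t)}$ at $0$, no mass in $]0,t]$) and $m_1([0,\infty[)=1$ to conclude. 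Your approach is a bit less elementary but buys a real improvement: as you observe, it dispenses entirely with the monotonicity assumption on $-t\,\varphi'(t)$, so the bound in fact holds for any differentiable non-increasing $\varphi$ bounded below with $t\,\varphi(t)\to0$ and $\tilde\varphi$ defined. The paper's argument, by contrast, genuinely needs that extra hypothesis to pass from the convolution identity to the pointwise differential inequality.
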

\begin{proof}
Let $f(t)$ denote the inverse Laplace transform of $\exp\left(-p\, \tilde{\varphi}(p)\right)$. 
Then $\exp\left(-p \,\tilde{\varphi}(p)\right)/p$ is the Laplace transform of 
$F(t) := \int_0^t f(s)\, \dd s$. The limit $\lim_{t\rightarrow\infty} F(t) = \int_0^\infty f(s)\, \dd s$ is finite:
\begin{multline} \label{eq:mln}
\lim_{t\rightarrow\infty} F(t) = \lim_{p\rightarrow 0} \e^{-p \tilde{\varphi}(p)} = 
\exp\left(-\lim_{p\rightarrow 0} \left[p \tilde{\varphi}(p)\right]\right) = \exp\left(-\lim_{t\rightarrow\infty} \varphi(t)\right) \\ \leq \e^{-A} < \infty
\end{multline}

We now note the identity
\begin{equation}\label{eq:ident}
t\, f(t) = -\int_0^t s\, \varphi(s)\, f(t-s)\, \dd s
\end{equation}
Indeed, the Laplace transform of the left-hand side is 
\begin{equation}
\int_0^\infty t\, f(t) \, \e^{-p \, t} \, \dd t = -\frac{\dd \tilde{f}(p)}{\dd p} = \\
\left[p \,\frac{\dd \tilde{\varphi}(p)}{\dd p} + \tilde{\varphi}(p)\right] \tilde{f}(p)
\end{equation}
while the Laplace transform of the right-hand side is 
$$
-(t \, \varphi^\prime(t))^\sim\, \tilde{f}(p) = -\left\{\left[(t \,\varphi(t))^\prime\right]^\sim - 
\tilde{\varphi}(p)\right\} \, \tilde{f}(p) = \tilde{f}(p) \left[p\, \dd \tilde{\varphi}(p)/\dd p + \varphi(p)\right]
$$
hence the two are equal. Eq.~\eqref{eq:ident} follows from the uniqueness theorems for the 
Laplace transform. 

The assumption that $-t \, \varphi^\prime(t)$ is non-increasing implies that 
$t \, f(t) \geq - t\, \varphi(t) \int_0^t f(s)\, \dd s$, hence
$$-\varphi^\prime(t) \leq \frac{\dd}{\dd t} \ln(F(t))$$
Integration of this inequality over $[t,\infty[$ yields 
the inequality $\varphi(t) \leq  -\ln(F(t))$ on account of eq.~\eqref{eq:mln},
hence $F(t) \leq \e^{-\varphi(t)}$, q.e.d.

\end{proof}

\begin{theorem} \label{thm:gbound}
Let the function $g$ defined by eq.~\eqref{eq:KK2} be such that $-t\, g^\prime(t)$ 
is non-increasing.
For $t - \vert x\vert/c_0 < \varepsilon$ there is a constant $C$ such that 
\begin{equation} \label{eq:gbound}
u(t,x) \leq C \, \exp(-g(t-\vert x\vert/c_0)\, \vert x \vert)
\end{equation}
\end{theorem}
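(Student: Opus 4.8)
The plan is to reduce Theorem~\ref{thm:gbound} to Lemma~\ref{lem:gbound} applied with a suitably rescaled test function, and then to transfer the resulting bound on the inverse Laplace transform back to $u(t,x)$ via the convolution representation of Theorem~\ref{thm:convrel}. First I would fix $r = \vert x\vert$ and set $\varphi(t) := g(t)\, r$. Since $g$ is LICM (hence non-negative, non-increasing, differentiable, with $g(t)\to 0$ as $t\to\infty$, so one may take $A = 0$), and since $\lim_{t\to 0}[t\, g(t)] = 0$ because $g$ is locally integrable (this is exactly the hypothesis verified in the course of Lemma~\ref{lem:gpg1}), the function $\varphi$ meets the first three hypotheses of Lemma~\ref{lem:gbound}. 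The remaining hypothesis — that $-t\,\varphi^\prime(t) = -t\, g^\prime(t)\, r$ be non-increasing — holds precisely because the theorem assumes $-t\, g^\prime(t)$ is non-increasing, and multiplication by the positive constant $r$ preserves this. Lemma~\ref{lem:gbound} then yields that the inverse Laplace transform of $\exp(-p\,\tilde\varphi(p))/p$, which is exactly $H(t,r)$ in the notation of Theorem~\ref{thm:asymp} and Lemma~\ref{lem:gpg1}, satisfies $H(t,r) \leq \exp(-\varphi(t)) = \exp(-g(t)\, r)$ for all $t > 0$.

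Next I would invoke Theorem~\ref{thm:convrel}: setting $s := t - \vert x\vert/c_0$, we have
\begin{equation}
u(t,x) = \frac{1}{2\rho}\, H(s,\vert x\vert)\,\bigl[1 + \OO(s)\bigr],
\end{equation}
so that for $s < \varepsilon$ (equivalently $t - \vert x\vert/c_0 < \varepsilon$) the bracketed factor is bounded by some constant, say $2\rho\, C$, uniformly in the relevant range — this is the remark already made in the text immediately after Theorem~\ref{thm:convrel}. Combining this with the bound $H(s,\vert x\vert) \leq \exp(-g(s)\,\vert x\vert)$ from the previous paragraph gives
\begin{equation}
u(t,x) \leq C\,\exp\bigl(-g(t - \vert x\vert/c_0)\,\vert x\vert\bigr),
\end{equation}
which is exactly \eqref{eq:gbound}.

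The only genuinely delicate point is the dependence of the constant $C$ on $x$ (or equivalently on $r = \vert x\vert$). The $\OO(s)$ term in Theorem~\ref{thm:convrel} came from the estimate $f(t)\ast H(t - r/c_0) \leq H(t - r/c_0)\int_0^{t - r/c_0} f(\sigma)\,\dd\sigma$, and $\int_0^{s} f(\sigma)\,\dd\sigma$ does not involve $r$ at all, so the bracketed factor $1 + \OO(s)$ is in fact uniform in $x$; hence $C$ can be chosen independent of $x$, depending only on $\varepsilon$ and on $f$ (equivalently on $g$) near $0$. I would state this explicitly, since the assertion "$u(t,x) \leq C\exp(\cdots)$" is only meaningful near a wavefront if $C$ is at least locally uniform in $x$. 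A secondary bookkeeping item is to confirm that the hypothesis $\lim_{t\to 0}[t\, g(t)] = 0$ is legitimately available: it is not stated as an assumption of the theorem, but it is a property of every LICM function that is locally integrable, and it was already derived inside the proof of Lemma~\ref{lem:gpg1}, so it may be cited. With these two observations in place the proof is a direct concatenation of Lemma~\ref{lem:gbound} and Theorem~\ref{thm:convrel}.
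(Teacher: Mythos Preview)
Your proof is correct and follows exactly the paper's route: verify that $\varphi(t)=g(t)\,r$ satisfies the hypotheses of Lemma~\ref{lem:gbound} to obtain $H(t,r)\leq\exp(-g(t)\,r)$, then combine with the convolution estimate of Theorem~\ref{thm:convrel} to bound $u(t,x)$ near the wavefront. Your additional remarks on the uniformity of $C$ in $x$ and on why $t\,g(t)\to 0$ (which follows from $t\,g(t)\leq\int_0^t g(s)\,\dd s\to 0$ for a non-increasing locally integrable $g$, rather than from inside Lemma~\ref{lem:gpg1} where it is assumed, not proved) are welcome clarifications that the paper leaves implicit.
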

\begin{proof}
The function $g$ defined by eq.~\eqref{eq:KK2} is LICM, hence 
it is non-increasing and $-t\, g^\prime(t) \geq 0$. It therefore has a limit at 0 which is either 
finite or $\infty$.

By Theorem~\ref{thm:convrel} for $0 \leq t - \vert x \vert/c_0 \leq C_1$, where $C_1$ is some constant, 
$$u(t,x) \leq C\, H(t - \vert x \vert/c_0, \vert x \vert)$$
where $H(t,r)$ is the inverse Laplace transform of $\exp\left(-p\, \tilde{g}(p)\, r\right)/p$.
By Lemma~\ref{lem:gbound} $H(t - r/c_0,r) \leq \exp(-r\, g(t-r/c_0))$. 

\end{proof}

\noindent\textbf{Examples.}\\
\begin{enumerate}[(1)]
\item If $\kappa(p) - p/c_0 \equiv p \, \tilde{g}(p) = a\, p^\alpha$, where $0 < \alpha < 1$ 
and $a > 0$, then $g(t) = a \, t^{-\alpha}/\Gamma(1-\alpha)$ is LICM and the function $-t\, g^\prime(t)$ is non-increasing
Hence eq.~\eqref{eq:gbound} holds.
Consequently $u(p,x) \leq C \, \exp\left(-a \, (t - \vert x \vert/c_0)^{-\alpha}\right)$.
\item $g(t) := b \, \ln(1/(a\, t) + A)$, where $a, b > 0$ have the dimensions 1/T and 1/L, $A \geq 1$, is LICM, because
\begin{equation} \label{eq:lastident}
\ln(1/(a\, t) + A) = \ln(A) + \int_0^\infty \frac{1 - \e^{-r}}{r} \e^{-r \,t/(a\, A)}\,\dd r
\end{equation}
and the second term is the Laplace transform of a positive function satisfying eq.~\eqref{eq:doss}.
The identity ~\eqref{eq:lastident} can be checked by differentiating both sides with respect to $t$ and working out the 
resulting integral on the right-hand side. Consequently $\kappa(p) = p/c_0 + p\, \tilde{g}(p)$, where $c_0 > 0$, is a complete 
Bernstein function.
The function $-t g^\prime(t)$ is non-increasing. Hence eq.~\eqref{eq:gbound} holds 
and 
$$H(t,r) \leq (1/(a\,t) + A)^{-b\,r}$$
Note that the wavefront singularity decreases stepwise by the formula  
$H(t - r/c_0) \sim_{t \rightarrow (r/c_0)+} C_1 \, (t - r/c_0)^{r\,b}$.
\item The LICM functions $(t + 1)^{-1}$ and $\e^{-t}$ do not satisfy the hypotheses of
Theorem~\ref{thm:gbound}.
\end{enumerate}

\begin{remark}
If $\kappa(p) - p/c_0$ is regularly varying with index $\alpha \in ]0,1[$ at infinity, i.e.
$\kappa(p) - p/c_0 \equiv p \, \tilde{g}(p) \sim_\infty p^\alpha\, l(p)$, where the function $l$ is slowly
varying at infinity then, by the Karamata Tauberian theorem (Appendix) $g(t) \sim_0 l(1/t)\, 
t^{-\alpha}/\Gamma(1 - \alpha)$. $g$ is non-increasing but the hypothesis that $-t \, g^\prime(t)$ is non-increasing need not
be satisfied. The function $g(t) = B\, \ln(1/(a t) + A)\, t^{-\alpha}/\Gamma(1-\alpha)$ is however LICM (because it is a product of 
two CM functions) and satisfies the conditions of Theorem~\ref{thm:gbound}. Consequently
$$\vert u(t,x) \vert \leq \left[1/[a\, (t - \vert x \vert/c_0)] + A\right]^{-B\,\vert x \vert\, (t - \vert x \vert/c_0)^{-\alpha}}$$
\end{remark} 

\section{Relations between the function $g$ and the creep rate function.}
\label{ssec:discontinuities}

 We shall show that the value of $g(0+)$ or the singularity of the function $g$ at 0 are
to some extent determined by the value of the creep rate function at 0 or its singularity at 0.
We shall thus relate the wavefront jump discontinuity or regularity of Green's function at
the wavefront to the asymptotics of creep rate at $t = 0$.

If $g(0+) < \infty$ then Corollary~\ref{cor:zcv} implies that 
$\lim_{t\rightarrow 0+} H(t,r) = \lim_{p\rightarrow \infty} \e^{-p\, \tilde{g}(p) \, r} = \exp(-g(0+)\, r)$. Consequently 
$\lim_{t\rightarrow (r/c_0))+} H(t - r/c_0,r) = \exp(-g(0+)\, r)$ and 
$\lim_{t\rightarrow (r/c_0))-} H(t - r/c_0,r) = 0$. Thus the wavefront carries a discontinuity of 
Green's function. The wavefront discontinuity can be expressed in terms of $J^\prime(0+)$.

\begin{theorem}
If $J^\prime$ has a finite limit at 0, then the function $g$ also has a finite limit at 0 and
\begin{equation}\label{eq:kli}
g(0+) = \rho \, c_0 \, J^\prime(0+)/2.
\end{equation}
Furthermore
\begin{equation} \label{eq:ineqg}
g(t) \leq \rho \, c_0 \, J^\prime(t)/2
\end{equation}
\end{theorem}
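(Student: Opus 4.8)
The plan is to extract an exact algebraic relation between the Laplace transforms $\tilde g$ and $\widetilde{J^\prime}$ from the two available expressions for $\kappa$, invert it once and for all, and then read off both assertions of the theorem.

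First I would divide the identity obtained by equating \eqref{eq:kappadef} and \eqref{eq:important} by $p$, getting
\[
\frac{1}{c_0} + \tilde g(p) = \rho^{1/2}\,[p\,\tilde J(p)]^{1/2}, \qquad p > 0 .
\]
Squaring both sides, and using $1/c_0^2 = \rho\,J_0$ together with the elementary relation $p\,\tilde J(p) = J_0 + \widetilde{J^\prime}(p)$ (integration by parts, legitimate because $J$ is a Bernstein function, hence grows at most linearly, while $J^\prime$ is LICM), the terms $\rho\,J_0$ cancel and one is left with the clean quadratic identity
\[
\rho\,\widetilde{J^\prime}(p) = \frac{2}{c_0}\,\tilde g(p) + \tilde g(p)^2 .
\]
Since $g \in \mathcal{L}^1_\loc([0,\infty[)$, the term $\tilde g(p)^2$ is the Laplace transform of $g\ast g$, so inversion gives, for $t > 0$,
\[
\rho\,J^\prime(t) = \frac{2}{c_0}\,g(t) + (g\ast g)(t) .
\]

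From here both claims follow quickly. Because $g \geq 0$ we have $(g\ast g)(t) \geq 0$, which is exactly the inequality \eqref{eq:ineqg}: $g(t) \leq \rho\,c_0\,J^\prime(t)/2$. As $J^\prime$ is completely monotone it is non-increasing, so $J^\prime(t) \leq J^\prime(0+) < \infty$; hence $g$ is bounded near $0$, and being LICM (thus non-increasing) it has a finite limit $g(0+) \leq \rho\,c_0\,J^\prime(0+)/2$. Finally, boundedness and monotonicity of $g$ on $]0,t]$ give $0 \leq (g\ast g)(t) \leq t\,g(0+)^2 \to 0$ as $t \to 0+$, so passing to the limit in the displayed identity yields $\rho\,J^\prime(0+) = 2\,g(0+)/c_0$, i.e.\ \eqref{eq:kli}.

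I do not expect a serious obstacle here; the decisive observation is the squaring trick, which removes the square-root nonlinearity in $\kappa$ and makes the positivity of the remainder $(c_0/2)\,g\ast g$ manifest. The only points needing a little care are the standard bookkeeping ones: verifying that $\widetilde{J^\prime}(p)$ exists and equals $p\,\tilde J(p) - J_0$ for $p > 0$, and that the convolution theorem applies so that $\tilde g(p)^2$ inverts to $g\ast g$ — both immediate under the standing LICM/Bernstein hypotheses — together with the order of the argument (deduce finiteness of $g(0+)$ from \eqref{eq:ineqg} \emph{before} taking the limit that proves \eqref{eq:kli}).
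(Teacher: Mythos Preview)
Your argument is correct and starts from the same quadratic identity as the paper, but the subsequent route is genuinely different. The paper multiplies by $p$ and stays on the Laplace side throughout: it uses the Abelian relation $\lim_{p\to\infty} p\,\tilde f(p)=f(0+)$ together with the monotonicity of $p\,\tilde g(p)$ (borrowed from the proof of Lemma~\ref{lem:gpg1}) to obtain \eqref{eq:kli} first, and then invokes Theorem~\ref{thm:BernsteinNew} and Bernstein's theorem to show that $\frac{\rho c_0}{2}\,p\,\widetilde{J^\prime}(p)-p\,\tilde g(p)$ is CM, hence the Laplace transform of a nonnegative measure; combined with the already-established equality at $0$ this yields \eqref{eq:ineqg}. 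You instead invert the identity once to the time domain, obtaining $\rho\,J^\prime(t)=\tfrac{2}{c_0}\,g(t)+(g\ast g)(t)$, and read off both conclusions directly from the nonnegativity and smallness of the convolution term. This reverses the logical order (you prove \eqref{eq:ineqg} first and deduce \eqref{eq:kli} from it), avoids Theorem~\ref{thm:BernsteinNew}, Lemma~\ref{lem:gpg1} and Bernstein's theorem entirely, and as a bonus establishes \eqref{eq:ineqg} without using the hypothesis $J^\prime(0+)<\infty$. The paper's approach, on the other hand, exhibits the CM structure of the remainder more explicitly, which fits its overall emphasis on complete monotonicity.
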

\begin{proof}
Since $\kappa(p) = p/c_0 + p \, \tilde{g}(p)$ and $\kappa(p) = \rho^{1/2} p \left[p \, \tilde{J}(p)\right]^{1/2} 
= \rho^{1/2} p \left[J_0 + \widetilde{J^\prime}(p)\right]^{1/2}$, 
using the formula $c_0 = 1/(\rho \, J_0)^{1/2}$, we obtain the identity 
\begin{equation}\label{eq:gjprime}
\frac{2}{c_0} p\, \tilde{g}(p) + p\, \left[ \tilde{g}(p)\right]^2 = \rho\, p \, \widetilde{J^\prime}(p)
\end{equation}
Since $J^\prime(0+) = \lim_{p\rightarrow\infty} \left[ p \, \widetilde{J^\prime}(p)\right]$ is finite
and $p \, \left[ \tilde{g}(p)\right]^2  \geq 0$,  the function $p\, \tilde{g}(p)$ is 
bounded from above. It follows from the proof of Lemma~\ref{lem:gpg1} that this function 
is non-decreasing, hence it tends to a limit $A$ as $p \rightarrow \infty$. Hence the 
second term on the left-hand side of equation~\eqref{eq:gjprime} tends asymptotically to 
$A^2/p$ for large $p$ and therefore it tends to zero as $p$ tends to infinity. Consequently 
$A = \rho\, c_0\, J^\prime(0)/2$, which proves the first part of the thesis.

For the second part, we note that $\tilde{g}(p)$ is the Laplace transform of a non-negative function, hence
Theorem~\ref{thm:BernsteinNew} implies that $p\,\tilde{g}(p)$ is a CM function. The second term on the left-hand
side of equation~\eqref{eq:gjprime} is the product of three CM functions ($p^{-1}$ and twice $p \, \tilde{G}(p)$),
hence it is CM. Equation~\eqref{eq:kli} implies that that 
$$\frac{\rho\, c_0}{2} p \, \widetilde{J^\prime}(p) - p\, \tilde{g}(p) \equiv \int_0^\infty \frac{\dd}{\dd t} \left[ \frac{\rho\, c_0}{2} J^\prime(t) -
g(t) \right] \, \e^{-p\, t} \, \dd t. $$
By Bernstein's theorem \cite{BernsteinFunctions} the above expression is the Laplace 
transform of a positive Radon measure $\mu$. By the uniqueness of the Laplace transform 
$$\mu(\dd t) = \frac{\dd}{\dd t} \left[ \frac{\rho\, c_0}{2} J^\prime(t) -g(t)\right] \, \dd t$$
and therefore $$\frac{\dd}{\dd t} \left[ \frac{\rho\, c_0}{2} J^\prime(t) -g(t)\right] \, \dd t \geq 0$$
The last inequality and equation~\eqref{eq:kli} imply inequality~\eqref{eq:ineqg}, q.e.d.

\end{proof}

Eq.~\eqref{eq:kli} can also be expressed in the form
\begin{equation} \label{eq:g0vsJ}
g(0+) = J^\prime(0+)/(2 J_0\, c_0)
\end{equation}

It follows from equation~\eqref{eq:duality} that $J_0\, G_0 = 1$ and $J^\prime(0+) \, G_0 + G^\prime(0+) \, J_0 = 0$,
where $G_0 = \lim_{t \rightarrow 0} G(t) < \infty$ in view of our assumption that $J_0 > 0$. The above identities imply 
a third expression for $g(0+)$:
\begin{equation} \label{eq:g0vsJ1}
g(0+) = -G^\prime(0+)/(2\, \rho\, c_0^{\;3})
\end{equation}
which is consistent with Chu's amplitude equation for a shock wave \cite{Chu62}. Under our assumptions $g(0+) = J^\prime(0+) = \infty$ if 
and only if $G^\prime(0+) = -\infty$. The latter inequality was given as a criterion for non-existence of 
shocks by Pr\"{u}ss in \cite{Pruss87}.

Equation~\eqref{eq:ineqg} implies that the singularity of $g$ at 0 is not stronger than the singularity of 
the creep rate $J^\prime$. 

Equations~\eqref{eq:convrel}, \eqref{eq:asymp} and \eqref{eq:g0vsJ} imply that the jump discontinuity 
at the wavefront is given by the expression $(2 \rho)^{-1}\, \exp(-J^\prime(0+)\, r/(2 J_0\, c_0))$,
which is a well known result \cite{Christensen}.

\section{Conclusions.}

The causal function $g(t)$ introduced in \cite{SerHanJMP,HanWM2013} in the context of Kramers-Kronig relations
has now been used to study the wavefront asymptotics of Green's functions. The function $g$ is majorized by the 
creep rate function $J^\prime$ multiplied by the factor $\rho\, c_0/2$ and its value at 0 can be expressed in 
terms of $J^\prime(0)$. Consequently discontinuities are expected at the wavefront if and only if $J^\prime(0) < \infty$.
Use of $g$ allows for more detailed estimates than it would be possible with constitutive parameters only.

Non-linearity can generate shock waves in elastic media through gradient catastrophe. 
Our analysis of wavefront singularities suggests that this process might be impeded by 
an unbounded viscoelastic attenuation. So far little has been done to clarify 
the competition between non-linearity and various kinds of viscoelastic dissipation.

\begin{bibdiv}
\begin{biblist}

\bib{BinghamGoldieTeugels}{book}{
      author={Bingham, N.~H.},
      author={Goldie, C.~M.},
      author={Teugels, J.~L.},
       title={Regular variation},
   publisher={Cambridge University Press},
     address={Cambridge},
        date={1987},
}

\bib{CaputoMainardi}{article}{
      author={Caputo, M.},
      author={Mainardi, F.},
       title={New dissipation model based on memory mechanism},
        date={1976},
     journal={Pure Applied Geophysics},
      volume={91},
       pages={134\ndash 147},
}

\bib{CaCaMaHa}{article}{
      author={Carcione, J.~M.},
      author={Cavallini, F.},
      author={Mainardi, F.},
      author={Hanyga, A.},
       title={Time-domain seismic modeling of constant-{Q} wave propagation
  using fractional derivatives},
        date={2002},
     journal={Pure appl. Geophys.},
      volume={159},
       pages={1714\ndash 1736},
}

\bib{Christensen}{book}{
      author={Christensen, R.~M.},
       title={Theory of viscoelasticity: {A}n introduction},
   publisher={Academic Press},
     address={New York},
        date={1971},
}

\bib{Chu62}{article}{
      author={Chu, B.T.},
       title={Stress waves in isotropic viscoelastic materials},
        date={1962},
     journal={J. de m\'{e}c.},
      volume={1},
       pages={439\ndash 447},
}

\bib{DeschGrimmer85}{article}{
      author={Desch, W.},
      author={Grimmer, R.},
       title={Initial boundary value problems for integrodifferential
  equations},
        date={1985},
     journal={J. Integral Equations},
      volume={10},
       pages={73\ndash 97},
}

\bib{DeschGrimmer89a}{article}{
      author={Desch, W.},
      author={Grimmer, R.},
       title={Smoothing properties of linear {V}olterra integrodifferential
  equations},
        date={1989},
     journal={SIAM J. Math. Anal.},
      volume={20},
       pages={116\ndash 132},
}

\bib{Greenberg68}{article}{
      author={Greenberg, J.~M.},
       title={The existence of steady shock waves for a class of nonlinear
  dissipative materials with memory},
        date={1968},
     journal={Quart. appl. Math.},
      volume={26},
       pages={27\ndash 34},
}

\bib{GripenbergLondenStaffans}{book}{
      author={Gripenberg, G.},
      author={Londen, S.~O.},
      author={Staffans, O.~J.},
       title={Volterra integral and functional equations},
   publisher={Cambridge University Press},
     address={Cambridge},
        date={1990},
}

\bib{HanWM2013}{article}{
      author={Hanyga, A.},
       title={Wave propagation in linear viscoelastic media with completely
  monotonic relaxation moduli},
        date={2013},
     journal={Wave Motion},
      volume={50},
       pages={909\ndash 928},
}

\bib{HanJCA}{article}{
      author={Hanyga, A.},
       title={Dispersion and attenuation for an acoustic wave equation
  consistent with viscoelasticity},
        date={2014},
        journal={J. Comp. Acous.},
        volume={22},
        pages={1450006},
}

\bib{HanDue}{article}{
      author={Hanyga, A.},
       title={Attenuation and shock waves in linear hereditary viscoelastic media. 
  Strick-Mainardi and Jeffreys-Lomnitz-Strick creep compliances.},
       journal={Pure appl. Geophys.},
        volume={171},
        pages={2097\ndash 2109},
        date={2014},
}

\bib{FastTrack}{article}{
      author={Hanyga, A.},
      author={Seredy\'{n}ska, M.},
       title={Some effects of the memory kernel singularity on wave propagation
  and inversion in poroelastic media, {I}: Forward modeling},
        date={1999},
     journal={Geophys. J. Int.},
      volume={137},
       pages={319\ndash 335},
}

\bib{HanQAM}{article}{
      author={Hanyga, A.},
      author={Seredy\'{n}ska, M.},
       title={Asymptotic wavefront expansions in hereditary media with singular
  memory kernels},
        date={2002},
     journal={Quart. appl. Math.},
      volume={LX},
       pages={213\ndash 244},
}

\bib{HanDuality}{article}{
      author={Hanyga, A.},
      author={Seredy\'{n}ska, M.},
       title={Relations between relaxation modulus and creep compliance in
  anisotropic linear viscoelasticity},
        date={2007},
     journal={J. of Elasticity},
      volume={88},
       pages={41\ndash 61},
}

\bib{HavriliakNegami66}{article}{
      author={Havriliak, S.},
      author={Negami, S.},
       title={A complex plane analysis of alpha-dispersions in some polymer
  systems},
        date={1966},
     journal={J. Polym. Sci.},
      volume={14},
       pages={99\ndash 117},
}

\bib{HrusaRenardy85}{article}{
      author={Hrusa, W.},
      author={Renardy, M.},
       title={On wave propagation in linear viscoelasticity},
        date={1985},
     journal={Quart. appl. Math.},
      volume={43},
       pages={237\ndash 253},
}

\bib{Jacob01I}{book}{
      author={Jacob, N.},
       title={Pseudo-{D}ifferential operators and {M}arkov processes},
   publisher={Imperial College Press},
     address={London},
        date={2001},
      volume={I},
}

\bib{Kjartansson:ConstQ}{article}{
      author={Kjartansson, E.},
       title={Constant {Q}-wave propagation and attenuation},
        date={1979},
     journal={J. Geophys. Res.},
      volume={84},
       pages={4737\ndash 4748},
}

\bib{LokshinSuvorova82}{book}{
      author={Lokshin, A.~A.},
      author={Suvorova, Yu.~V.},
       title={Mathematical theory of wave propagation in media with memory},
   publisher={Izd. Moskovskogo Universiteta},
        date={1982},
        note={In Russian},
}

\bib{MainardiVE}{book}{
      author={Mainardi, F.},
       title={Fractional {C}alculus and {W}aves in {L}inear {V}iscoelasticity},
   publisher={World-Scientific},
        date={2010},
}

\bib{Molinari}{article}{
      author={Molinari, A.},
       title={Visco\'{e}lasticit\'{e} lin\'{e}aire and functions
  compl\`{e}tement monotones},
        date={1975},
     journal={J. de m\'{e}canique},
      volume={12},
       pages={541\ndash 553},
}

\bib{NasholmHolm2013}{article}{
      author={N\"{a}sholm, S.~P.},
      author={Holm, S.},
       title={On a fractional {Z}ener elastic wave equation},
        date={2013},
     journal={Fract. Calc. Appl. Anal.},
      volume={16},
       pages={26\ndash 50},
}

\bib{Pruss87}{article}{
      author={Pr\"{u}ss, J.},
       title={Positivity and regularity of hyperbolic {V}olterra equations in
  {B}anach spaces},
        date={1987},
     journal={Math. Ann.},
      volume={279},
       pages={317\ndash 344},
}

\bib{ReHrNo:VE}{book}{
      author={Renardy, M.},
      author={Hrusa, W.~J.},
      author={Nohel, J.~A.},
       title={Mathematical {P}roblems in {V}iscoelasticity},
   publisher={Longman Scientific \& Technical},
     address={Essex},
        date={1987},
}

\bib{BernsteinFunctions}{book}{
      author={Schilling, R.~L.},
      author={Song, R.},
      author={Vondra\v{c}ek, Z.},
       title={Bernstein functions. {T}heory and applications},
   publisher={De Gruyter},
     address={Berlin},
        date={2010},
}

\bib{SerHanJMP}{article}{
      author={Seredy\'{n}ska M.},
      author={Hanyga A.}, 
      title={Relaxation, dispersion, attenuation and finite propagation 
       speed in viscoelastic media},
      journal={J. Math. Phys.}, 
      volume ={51},
      pages = {092901-\ndash092916},
      date={2010}
}

\bib{StrakaMeerschaertMcGough2013}{article}{
      author={Straka, P.},
      author={Meerschaert, M.~M.},
      author={McGough, R.~J.},
      author={Zhou, Yuzhen},
       title={Fractional wave equations with attenuation},
        date={2013},
     journal={Fract. Calc. and Applications},
      volume={16},
       pages={262–\ndash 272},
}

\bib{Strick1:ConstQ}{article}{
      author={Strick, E.},
       title={A predicted pedestal effect for a pulse propagating in constant
  {Q} solids},
        date={1970},
     journal={Geophysics},
      volume={35},
       pages={387\ndash 403},
}

\bib{Strick3:ConstQ}{article}{
      author={Strick, E.},
       title={An explanation of observed time discrepancies between continuous
  and conventional well velocity surveys},
        date={1971},
     journal={Geophysics},
      volume={36},
       pages={285\ndash 295},
}

\bib{BagleyTorvik3}{article}{
      author={Torvik, P.~J.},
      author={Bagley, R.~L.},
       title={On the appearance of the fractional derivative in the behavior of
  real materials},
        date={1983},
     journal={J. appl. Mechanics},
      volume={51},
       pages={294\ndash 298},
}

\end{biblist}
\end{bibdiv}

\appendix
\section{Recapitulation of necessary mathematical concepts.}  

An infinitely differentiable function $g:\; ]0,\infty[ \rightarrow \mathbb{R}$ is 
{\em completely monotonic} (CM) if 
$(-1)^n \D^n g(t) \geq 0$ for $t > 0$ and $n \in \mathbb{Z}_+ \cup \{0\}$. By the Leibnitz 
formula the product $f(t)\, g(t)$ of two CM functions $f$ and $g$ is CM.

The CM function $g$ can have a singularity at 0 but it is integrable on any finite interval 
not including 0. The function $g$ is therefore locally integrable if and only if
$\int_0^1 g(t) \, \dd t < \infty$. In view of non-negativity, monotonicity and continuity 
$g$ always has a limit $g_\infty = \mu\{ 0\} \geq 0$ at $\infty$.

Bernstein's Theorem \cite{BernsteinFunctions} asserts 
that $g$ is a completely monotonic function if and only if $g$ is the Laplace transform of a positive Radon measure 
(essentially a locally finite measure) 
$\nu$ such that 
$$g(t) =  \int_{[0,\infty[} \e^{-r\, t} \nu(\dd r)$$
$g$ has a finite limit at 0 if $\nu$ has finite mass and then $\lim_{t\rightarrow 0} g(t) = g(0+) = \nu([0,\infty[)$. 
$g$ is locally integrable if and only if $\nu$ satisfies the inequality \eqref{eq:doss} \cite{GripenbergLondenStaffans}.

A non-negative function $f$ on $]0,\infty[$ is called a {\em Bernstein function} if it is differentiable and its derivative 
is completely monotonic. Monotonicity and continuity of $f$ imply that it has a finite non-negative limit at 0.

A function $h$ on $]0,\infty[$ is said to be a {\em complete Bernstein function} if it has the form $h(p) = p^2 \, \tilde{F}(p)$,
where $F$ is a Bernstein function. The function $h$ is a complete Bernstein function if and only if
it has the integral representation
$$h(p) = a + b\, p + p \int_{]0,\infty[} \frac{\nu(\dd r)}{p + r}$$
where $a$ and $b$ are non-negative constants and $\nu$ is a positive Radon measure satisfying inequality~\eqref{eq:doss}
\cite{BernsteinFunctions}.

A measurable real function $f$ on $[0,\infty[$ is said to be {\em regularly varying with index} $\gamma \in \mathbb{R}$ at $a = 0$ or $\infty$ if $\lim_{t\rightarrow a} f(\lambda\, t)/f(t) = \lambda^\gamma$ for $\lambda > 0$ \cite{BinghamGoldieTeugels}. The function $f$ is said to be {\em slowly varying} at $a$ if it is
regularly varying with index 0 at $a$. A general function $f$ regularly varying with index $\gamma$
at $a$ has the form $f(t) = t^\alpha\, l(t)$, where $l(t)$ is slowly varying at $a$.

We recall the Karamata Tauberian theorem (\cite{BinghamGoldieTeugels}, a corollary of Theorems 1.7.1 and 1.7.2):
\begin{theorem}
If $f \in \mathcal{L}^1_{\mathrm{loc}}([0,\infty[)$ is non-negative and monotone for $t > T$, where $T$ is a positive number,
$\alpha \geq 0$ and the function $l(t)$ is slowly 
varying at infinity then 
then $f(t) \sim_\infty t^{\alpha-1} \, l(t)$ is equivalent to $\tilde{f}(p) \sim_0 p^{-\alpha}\, l(1/p)$. 
\end{theorem}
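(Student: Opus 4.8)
The plan is to deduce this corollary from the two results of \cite{BinghamGoldieTeugels} to which it is attributed, namely Karamata's Tauberian theorem for Laplace--Stieltjes transforms and the monotone density theorem (\cite{BinghamGoldieTeugels}, Theorems~1.7.1 and~1.7.2), by passing through the primitive of $f$.

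First I would introduce
\[ U(x) := \int_0^x f(t)\,\dd t, \qquad x \geq 0. \]
Since $f \in \mathcal{L}^1_\loc([0,\infty[)$ is non-negative, the function $U$ is well defined, non-negative, non-decreasing and continuous, and its Laplace--Stieltjes transform coincides with $\tilde f$:
\[ \int_{[0,\infty[} \e^{-p\, x}\,\dd U(x) = \int_0^\infty \e^{-p\, t}\, f(t)\,\dd t = \tilde f(p), \]
the left-hand side being finite for exactly those $p > 0$ for which $\tilde f(p)$ exists. Karamata's Tauberian theorem, applied to the non-decreasing $U$, then yields the equivalence
\[ \tilde f(p) \sim_0 p^{-\alpha}\, l(1/p) \iff U(x) \sim_\infty \frac{x^{\alpha}\, l(x)}{\Gamma(1 + \alpha)}. \]
No Tauberian hypothesis on $f$ is needed for this step; the monotonicity of $f$ will enter only at the next stage.

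Next I would transfer the asymptotics from $U$ to its density $f$. This is where the assumption that $f$ is monotone on $]T,\infty[$ is used: for $\alpha > 0$ the monotone density theorem gives
\[ U(x) \sim_\infty \frac{x^{\alpha}\, l(x)}{\Gamma(1+\alpha)} \;\Longrightarrow\; f(x) \sim_\infty \frac{\alpha\, x^{\alpha - 1}\, l(x)}{\Gamma(1+\alpha)} = \frac{x^{\alpha-1}\, l(x)}{\Gamma(\alpha)}, \]
while the converse implication is the elementary Abelian fact (Karamata's theorem on integrals of regularly varying functions) that $f(x) \sim_\infty c\, x^{\alpha - 1}\, l(x)$ with $\alpha > 0$ forces $\int_0^x f \sim_\infty c\, x^{\alpha}\, l(x)/\alpha$. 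Absorbing the constant $1/\Gamma(\alpha)$ into the slowly varying factor $l$, which alters neither its slow variation nor the form of the displayed relations, now gives exactly
\[ f(t) \sim_\infty t^{\alpha-1}\, l(t) \iff \tilde f(p) \sim_0 p^{-\alpha}\, l(1/p). \]

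The step I expect to be the only delicate one is the appeal to the monotone density theorem, and with it the boundary case $\alpha = 0$: there that theorem yields merely $f(t) = \oo(l(t)/t)$ rather than an asymptotic equivalence, so the statement is genuinely an equivalence only for $\alpha > 0$, which is the range in which it is applied below ($\alpha = 1$ in Theorem~\ref{thm:asymp}, $\alpha \in ]0,1[$ in the Remark of Section~4). It then remains to note that the local integrability of $f$ is precisely what legitimizes the passage to $U$ and that the eventual monotonicity of $f$ is exactly the hypothesis required by the monotone density theorem; these verifications are routine.
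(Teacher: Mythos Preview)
The paper does not give a proof of this statement; it merely records it in the Appendix as a corollary of Theorems~1.7.1 and 1.7.2 in \cite{BinghamGoldieTeugels}. Your proposal carries out precisely the deduction the citation points to: pass to the non-decreasing primitive $U$, apply the Laplace--Stieltjes Tauberian theorem (1.7.1) to relate $\tilde f$ and $U$, then the monotone density theorem (1.7.2) to relate $U$ and $f$. This is the intended derivation and is correct; your observation about the $\Gamma(\alpha)$ constant and the degenerate case $\alpha=0$ is also apt (in the paper the theorem is only invoked with $\alpha=1$ in Theorem~\ref{thm:asymp} and $\alpha\in\,]0,1[$ in the Remark, so neither issue causes trouble there).
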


\end{document}